\definecolor{supcol}{RGB}{10,50,180}
\definecolor{eqcol}{RGB}{220,10,100}
\newcommand{\mca}{\mathcal}
\newcommand{\mbb}{\mathbb}
\newcommand{\msf}{\mathsf}
\newcommand{\react}[2]{\overset{#1}{\underset{#2}{\rightleftharpoons}}}
\newcommand{\sectionprl}[1]{{\em #1}\/.---}
\begin{document}
\title{Topological Speed Limit}

\author{Tan Van Vu}
\email{tanvu@rk.phys.keio.ac.jp}

\affiliation{Department of Physics, Keio University, 3-14-1 Hiyoshi, Kohoku-ku, Yokohama 223-8522, Japan}

\author{Keiji Saito}
\email{saitoh@rk.phys.keio.ac.jp}

\affiliation{Department of Physics, Keio University, 3-14-1 Hiyoshi, Kohoku-ku, Yokohama 223-8522, Japan}

\date{\today}

\begin{abstract}
Any physical system evolves at a finite speed that is constrained not only by the energetic cost but also by the topological structure of the underlying dynamics. In this Letter, by considering such structural information, we derive a unified topological speed limit for the evolution of physical states using an optimal transport approach. We prove that the minimum time required for changing states is lower bounded by the discrete Wasserstein distance, which encodes the topological information of the system, and the time-averaged velocity. The bound obtained is tight and applicable to a wide range of dynamics, from deterministic to stochastic, and classical to quantum systems. In addition, the bound provides insight into the design principles of the optimal process that attains the maximum speed. We demonstrate the application of our results to chemical reaction networks and interacting many-body quantum systems.
\end{abstract}

\pacs{}
\maketitle

\sectionprl{Introduction}Investigating how fast a system can evolve is one of the central problems in classical and quantum mechanics.
In a seminal work by Mandelstam and Tamm \cite{Mandelstam.1945.JP}, a fundamental bound on the operational time required for the transformation between two orthogonal states for unitary dynamics was derived.
Since then, generalizations of the bound for arbitrary states and nonunitary dynamics have been intensively studied \cite{Uhlmann.1992.PLA,Margolus.1998.PD,Campo.2013.PRL,Deffner.2013.PRL,Taddei.2013.PRL,Pires.2016.PRX,Mondal.2016.PLA,Deffner.2017.NJP,Shanahan.2018.PRL,Okuyama.2018.PRL,Ito.2018.PRL,Shiraishi.2018.PRL,Campaioli.2018.PRL,Funo.2019.NJP,GarcaPintos.2019.NJP,Kieu.2019.PRSA,Hu.2020.PRA,Ito.2020.PRX,Nicholson.2020.NP,Vo.2020.PRE,Fogarty.2020.PRL,Ilin.2021.PRA,Sun.2021.PRL,Vu.2021.PRL,Connor.2021.PRA,Shiraishi.2021.PRR,BolonekLaso.2021.Q,Campo.2021.PRL,Delvenne.2021.arxiv,Hamazaki.2022.PRXQ,Hasegawa.2022.arxiv,Vu.2022.arxiv1,Nakajima.2022.arxiv,Salazar.2022.arxiv,Kolchinsky.2022.arxiv}, leading to the notion of speed limits (see Ref.~\cite{Deffner.2017.JPA} for a review).
These speed limits establish the ultimate rate at which a system can evolve to a distinguishable state and have found diverse applications, for example, in quantum control \cite{Caneva.2009.PRL,Deffner.2014.JPB,Campbell.2017.PRL,Funo.2017.PRL}, quantum metrology \cite{Giovannetti.2006.PRL,Beau.2017.PRL}, and thermodynamics of computation \cite{Lloyd.2000.N,Proesmans.2020.PRL,Deffner.2021.EPL,Zhen.2021.PRL,Vu.2022.PRL,Lee.2022.arxiv,Vu.2022.arxiv1}.

Interacting systems generally form topological structures in their dynamics, such as chemical reaction networks that consist of several species (see the schematic in Fig.~\ref{fig:Cover}).
In general, a state represented by a vector $\vb*{x}_t$ evolves over time and is significantly affected by the topology of the dynamics.
For instance, a Markov jump process with dense connectivity may relax toward an equilibrium state faster than one with sparse connectivity.
A many-body system with long-range interactions can change quantum states faster than one with short-range interactions \cite{Eisert.2013.PRL}.
Although speed limits for state transformations have been intensively investigated, the topological nature arising from the network structure in the dynamics has not been fully accounted for.
Note that conventional speed limits, which read $\tau\ge\mca{L}(\vb*{x}_0,\vb*{x}_\tau)/\overline{v}$, employed non-topological metrics $\mca{L}$, such as the Bures angle, trace norm, quantum Fisher information, etc., to quantify the distance between the initial and final states \cite{Deffner.2017.JPA}. 
These metrics are always upper bounded by a constant that does not scale with the size of the system, whereas the dynamics strongly depends on the system size. 
Velocity $\overline{v}$ is determined by the entire dynamics of the system \cite{Campaioli.2019.Q}, and hence it is generally of the order of system size. 
Consequently, conventional speed limits become trivial (i.e., $\tau \ge \mca{L}(\vb*{x}_0,\vb*{x}_\tau)/\overline{v}\to 0$) as the system increases in terms of size \cite{Bukov.2019.PRX}. 
This indicates that in order to derive meaningful bounds, metrics that capture the topological nature and are scalable with system size should be considered.

\begin{figure}[t]
\centering
\includegraphics[width=1\linewidth]{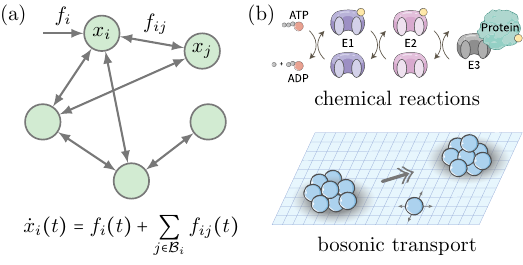}
\protect\caption{(a) Generic time evolution of a physical state $\vb*{x}_t=[x_1(t),\dots,x_N(t)]^\top$ on a graph.
$x_i(t)$ is evolved because of the flows $\{f_{ij}(t)\}$ exchanged between neighboring vertices and an external flow $f_{i}(t)$.
(b) Examples include reactant concentrations in deterministic chemical reactions and boson numbers in interacting bosonic systems.}\label{fig:Cover}
\end{figure}

In this Letter, we derive a speed limit for arbitrary states $\vb*{x}_t$ using a topological metric defined through the network structure in the dynamics.
The time evolution of such states is described by a graph in which each vertex exchanges flows with each other and may be pumped by an external flow.
Examples include the probability distribution of discrete systems, mass concentrations in chemical reaction networks, and vectors of observables in quantum systems (see Fig.~\ref{fig:Cover} again for illustration).
We employ a generalized version of the discrete Wasserstein distance to quantify the distance between the states.
This distance, widely used in optimal transport theory \cite{Villani.2008}, encodes topological information and can grow proportional to the size of the system.
We prove that the minimum operational time required to change the physical state is lower bounded by the Wasserstein distance divided by the average velocity [cf.~Eqs.~\eqref{eq:prim.sl.unnorm} and \eqref{eq:prim.sl.norm}]. The obtained speed limit is tight and can be saturated, even when the system size increases.
Moreover, it is applicable to a broad range of dynamics ranging from deterministic and stochastic classical systems to isolated and open quantum systems.
For example, we apply the theory to chemical reaction networks using the Wasserstein distance applicable to any reversible chemical reaction and provide a reaction speed formula that can discriminate between different chemical reactions \cite{fnt1}.
Another important application is the interacting bosonic transport for arbitrary initial (mixed) states with and without a thermal environment, which is relevant to the Lieb--Robinson velocity \cite{Lieb.1972.CMP}.
Through the examples, we demonstrate that considering topological metrics does not only provide quantitatively tight bounds but also qualitatively reveals the physical mechanism of state transformations, which cannot be obtained with speed limits reported thus far.

\sectionprl{General setup}We consider a time-dependent vector state $\vb*{x}_t\coloneqq[x_1(t),\dots,x_N(t)]^\top$ and an undirected graph ${G}(\mca{V},\mca{E})$ with the vertex set $\mca{V}=\{1,\dots,N\}$ and edge set $\mca{E}$.
Each element $x_i(t)$ corresponds to a vertex $i\in \mca{V}$.
For example, $\vb*{x}_t$ can be a vector of the probability distribution of a discrete system, reactant concentrations of chemical reaction networks, or physical observables in classical and quantum systems (examples are provided later).
For each vertex $i$, let $\mca{B}_i\coloneqq\{j\,|\,\ev{i,j}\in\mca{E}\}$ denote the set of neighboring vertices of $i$.
We assume that the time evolution of $\vb*{x}_t$ is given by the following deterministic equation [see Fig.~\ref{fig:Cover}(a)]:
\begin{equation}
\dot{x}_i(t)=f_i(t)+\sum_{j\in \mca{B}_i}f_{ij}(t),\label{eq:det.eq}
\end{equation}
where $f_{ij}(t)=-f_{ji}(t)$ denotes the flow exchange between vertices $i$ and $j$ for $i\neq j$ and $f_i(t)$ is an arbitrary external flow.
In the absence of external flows [i.e., $f_i(t)=0$ for all $i$], $\sum_{i=1}^Nx_i(t)$ is invariant.
Examples of Eq.~\eqref{eq:det.eq} include the master equation of Markov jump processes, rate equation of chemical reaction networks, and time evolution of the observables in quantum systems.
We define a time-dependent velocity \cite{fnt5}, which is the sum of the absolute values of the external and exchanged flows, given by
\begin{equation}
\upsilon_{t,\lambda}\coloneqq \lambda\sum_i|f_i(t)|+\sum_{\ev{i,j}\in\mca{E}}|f_{ij}(t)|,
\end{equation}
where $\lambda\ge 0$ is a weighting factor, and the second summation is over all unordered pairs $\ev{i,j}\in\mca{E}$.
For simplicity, we denote $\upsilon_{t,0}$ by $\upsilon_{t}$.
We also define the Manhattan norm for an arbitrary vector $\vb*{x}$ as $\|\vb*{x}\|_1\coloneqq\sum_i|x_i|$ and the time average of an arbitrary time-dependent quantity $w_t$ as $\ev{w_t}_\tau\coloneqq\tau^{-1}\int_0^\tau w_t\dd{t}$.

\sectionprl{Wasserstein distance}Here we introduce the discrete $L^1$-Wasserstein distance between two states $\vb*{x}$ and $\vb*{y}$ on the graph ${G}(\mca{V},\mca{E})$.
First, we consider the case in which $\vb*{x}$ and $\vb*{y}$ are balanced (that is, $\sum_ix_i=\sum_iy_i$), and then we generalize the distance to the unbalanced case (that is, $\sum_ix_i\neq\sum_iy_i$).
Let $d_{ij}$ denote the shortest path distance between the vertices $i$ and $j$ in the graph.
In other words, $d_{ij}$ is the minimum length of paths connecting $i$ and $j$.
Graph $G$ is assumed to be connected \cite{fnt2}; therefore, $d_{ij}$ is always finite.
Suppose that we have a transport plan that redistributes $\vb*{x}$ to $\vb*{y}$ by sending an amount of $\pi_{ij}$ from $x_j$ to $y_i$ with a cost of $d_{ij}$ per unit weight for all ordered pairs $\ev{i,j}$.
The Wasserstein distance is then defined as the minimum transport cost for all feasible plans, given by
\begin{equation}
\mca{W}_1(\vb*{x},\vb*{y})\coloneqq\min_{\pi\in\Pi(\vb*{x},\vb*{y})}\sum_{i,j}d_{ij}\pi_{ij}.
\end{equation}
Here, $\Pi(\vb*{x},\vb*{y})$ denotes the set of all transport plans $\pi=[\pi_{ij}]\in\mbb{R}_{\ge 0}^{N\times N}$ that satisfy $\sum_{j}\pi_{ij}=y_i~\text{and}~\sum_{j}\pi_{ji}=x_i$.
Previous studies have shown that the Wasserstein distance plays a crucial role in statistics and machine learning \cite{Kolouri.2017.SPM}, computer vision \cite{Haker.2004.IJCV}, linguistics \cite{Huang.2016.NIPS}, molecular biology \cite{Schiebinger.2019.Cell}, and stochastic thermodynamics \cite{Aurell.2011.PRL,Nakazato.2021.PRR,Vu.2022.arxiv1,Dechant.2022.JPA}.

Next, we describe the generalized Wasserstein distance for the unbalanced case.
Transport between two unbalanced states can be enabled by allowing \emph{add} and \emph{remove} operations in addition to transportation between vertices.
More precisely, an infinitesimal mass $\delta\vb*{x}$ of $\vb*{x}$ can either be removed at cost $\lambda\|\delta\vb*{x}\|_1$ or moved from $\vb*{x}$ to $\vb*{y}$ at cost $\mca{W}_1(\delta\vb*{x},\delta\vb*{y})$.
Mathematically, the generalized Wasserstein distance between unbalanced states can be defined as \cite{Piccoli.2013.ARMA}
\begin{equation}\label{eq:gen.W1}
\begin{aligned}[b]
&\mca{W}_{1,\lambda}(\vb*{x},\vb*{y})\coloneqq\min\qty{\lambda(\|\vb*{x}-\tilde{\vb*{x}}\|_1+\|\vb*{y}-\tilde{\vb*{y}}\|_1)+\mca{W}_1(\tilde{\vb*{x}},\tilde{\vb*{y}})},
\end{aligned}
\end{equation}
where the minimum is over all the states $\tilde{\vb*{x}}$ and $\tilde{\vb*{y}}$ such that $\|\tilde{\vb*{x}}\|_1=\|\tilde{\vb*{y}}\|_1$.
By definition \eqref{eq:gen.W1}, distance $\mca{W}_{1,\lambda}$ always satisfies the triangle inequality \cite{Piccoli.2013.ARMA}.
If $\vb*{x}$ and $\vb*{y}$ are balanced states, then $\mca{W}_{1,\lambda}$ is reduced to $\mca{W}_1$ within the $\lambda\to +\infty$ limit.
We also note that $\mca{W}_{1,\lambda}$ can be calculated numerically using the linear programming method \cite{Supp.PhysRev}.

\sectionprl{Main results}We now utilize the generalized Wasserstein distance \eqref{eq:gen.W1} to derive a topological speed limit for any state $\vb*{x}_t$ obeying the general dynamics \eqref{eq:det.eq}.
Specifically, we prove that the minimum time required to transform $\vb*{x}_0$ into $\vb*{x}_\tau$ is lower bounded by the Wasserstein distance divided by the average velocity:
\begin{equation}\label{eq:prim.sl.unnorm}
\tau\ge\frac{\mca{W}_{1,\lambda}(\vb*{x}_0,\vb*{x}_\tau)}{\ev{\upsilon_{t,\lambda}}_\tau},~\forall \lambda\ge 0.
\end{equation}
In the case that the external flows are absent [i.e., $f_i(t)=0$], inequality \eqref{eq:prim.sl.unnorm} can be reduced to a simple bound by taking the $\lambda\to+\infty$ limit, which reads
\begin{equation}\label{eq:prim.sl.norm}
\tau\ge\frac{\mca{W}_1(\vb*{x}_0,\vb*{x}_\tau)}{\ev{\upsilon_{t}}_\tau}.
\end{equation}
The inequalities \eqref{eq:prim.sl.unnorm} and \eqref{eq:prim.sl.norm} are our main results; the proof is postponed to the end of the Letter.

These results have several physically critical properties.
(i) First, these bounds can be derived as long as the time evolution of $\vb*{x}_t$ is described by Eq.~\eqref{eq:det.eq}, which is a general setting for both the classical and quantum cases.
Notably, the bounds can be saturated if the time evolution \eqref{eq:det.eq} realizes an optimal transport plan.
(ii) Second, our bounds utilize topological information about the system dynamics to provide a stringent constraint on the speed of changing states.
Topological information is encoded into the Wasserstein distance, and this distance term can be as large as the order of the system's size.
(iii) Third, by further upper bounding the time-averaged velocity $\ev{v_{t,\lambda}}_\tau$ by relevant quantities, such as the thermodynamic and kinetic costs, we can derive more interpretable bounds, which clarify the physical mechanism of the speed of state transformations.
(iv) Finally, the speed limit for an arbitrary scalar observable defined in terms of state $\vb*{x}_t$ can also be obtained as a consequence of Eq.~\eqref{eq:prim.sl.unnorm} \cite{Supp.PhysRev}.

In the following, we illustrate the above remarks, especially (i)-(iii), through two applications to classical and quantum systems (see Ref.~\cite{Supp.PhysRev} for further applications in isolated and Markovian open quantum systems, measurement-induced quantum walk \cite{Didi.2022.PRE}, and quantum communication \cite{Bose.2003.PRL,Murphy.2010.PRA}).

\sectionprl{Application 1: Chemical reaction networks}We consider a chemical reaction system composed of several chemical species $X_i$ ($i\in\mca{S}$) that interact through reversible elementary reaction channels $\rho\in\mca{R}$.
Here, $\mca{S}$ and $\mca{R}$ denote the set of indices of the species and reaction channels, respectively.
Each reaction channel is represented as
\begin{equation}
\sum_{i}\nu^{+\rho}_{i}X_i \react{\kappa^{+\rho}}{\kappa^{-\rho}} \sum_{i}\nu^{-\rho}_{i}X_i,
\end{equation}
where $+\rho$ and $-\rho$ correspond to the forward and backward reactions, respectively, $\{\kappa^{\pm\rho}\}$ are the macroscopic reaction rates, and $\{\nu_{i}^{\pm\rho}\}$ are the stoichiometric coefficients.
Let $\vb*{x}_t$ denote the vector of the mass concentrations of species. The molar concentration $\vb*{c}_t$ can be related as $c_i(t)=x_i(t)/m_i$, where $m_i$ denotes the molar mass of species $X_i$.
The time evolution of $\vb*{x}_t$ can be described by the deterministic rate equation:
\begin{equation}\label{eq:rate.eq}
\dot{x}_i(t)=\sum_{\rho}m_i(\nu^{+\rho}_{i}-\nu^{-\rho}_{i})J^{\rho}_t,
\end{equation}
where $J^{\rho}_t\coloneqq J^{-\rho}_t-J^{+\rho}_t$ is the net reaction current and $J^{\pm\rho}_t\coloneqq\kappa^{\pm\rho}\prod_{i}c_i(t)^{\nu_{i}^{\pm\rho}}$ are the reaction fluxes.

Next, we derive the speed limits for the system in terms of the Wasserstein distance defined on graph $G$.
For simplicity, here we consider closed reaction networks, in which the total mass concentration is conserved \cite{fnt6}.
The generalization for open reaction networks, wherein the total mass conservation may be violated, is presented in Ref.~\cite{Supp.PhysRev}.
The total mass conservation law implies $\sum_im_i(\nu_i^{+\rho}-\nu_i^{-\rho})=0$ for any $\rho$.
Due to these conditions, there always exist matrices $\msf{Z}^\rho=[z_{ij}^\rho]$ such that the rate equation \eqref{eq:rate.eq} can be expressed in the form of Eq.~\eqref{eq:det.eq} with $f_{ij}(t)=\sum_\rho z_{ij}^\rho J_t^\rho$ and $f_i(t)=0$ \cite{Supp.PhysRev}.
The graph $G$ can be obtained by adding an undirected edge $\ev{i,j}$ to $\mca{E}$ for any $z_{ij}^\rho\neq 0$.
After some simple manipulations \cite{Supp.PhysRev}, we can prove that 
\begin{equation}\label{eq:vt.ub.crn}
\upsilon_{t}\le\sum_{\rho}\nu^\rho|J^\rho_t|,
\end{equation}
where $\nu^\rho\coloneqq(1/2)\sum_{i}m_i|\nu_i^{+\rho}-\nu_i^{-\rho}|$.
Combining Eqs.~\eqref{eq:prim.sl.norm} and \eqref{eq:vt.ub.crn} yields the following speed limit:
\begin{equation}\label{eq:crn.csl1}
\tau\ge\frac{\mca{W}_1(\vb*{x}_0,\vb*{x}_\tau)}{\ev{\sum_{\rho}\nu^\rho|J^\rho_t|}_\tau}\eqqcolon\tau_1.
\end{equation}
Equation \eqref{eq:crn.csl1} implies that the operational time is lower bounded by the Wasserstein distance and the net reaction currents.

\begin{figure}[b]
\centering
\includegraphics[width=1\linewidth]{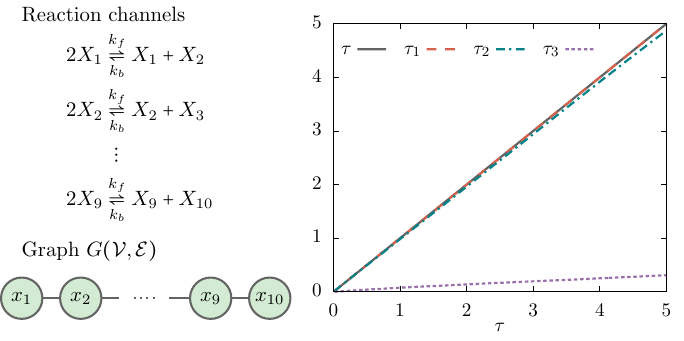}
\protect\caption{Numerical demonstration of the speed limits in the cascade reaction network with $N=10$.
The operational time $\tau$, topological bounds $\tau_1$ and $\tau_2$, and non-topological bound $\tau_3$ are depicted by solid, dashed and dash-dotted, and dotted lines, respectively.
The parameters are set to $k_f=2$ and $k_b=1$.
The initial mass concentration is $\vb*{x}_0=[1,0.9,\dots,0.1]^\top$.}\label{fig:CRNEx}
\end{figure}

A thermodynamic speed limit can also be obtained using Eq.~\eqref{eq:crn.csl1}.
The entropy production rate of a chemical reaction system can be defined as \cite{Rao.2016.PRX}
\begin{equation}
\sigma_t\coloneqq\sum_{\rho}J^\rho_t\ln\frac{J^{-\rho}_t}{J^{+\rho}_t},
\end{equation}
where the gas constant is set to unity. We define the following kinetic quantity:
\begin{equation}
\ell_t\coloneqq\sum_{\rho}(\nu^\rho)^2\frac{J^{-\rho}_t-J^{+\rho}_t}{\ln(J^{-\rho}_t/J^{+\rho}_t)},
\end{equation}
which is the sum of the microscopic Onsager coefficients \cite{Vu.2022.arxiv1,Yoshimura.2022.arxiv}.
Applying the Cauchy--Schwarz inequality, we prove that $\ev{\sum_\rho\nu^\rho|J^\rho_t|}_\tau\le\ev{\sqrt{\sigma_t\ell_t}}_\tau\le\sqrt{\ev{\sigma_t}_\tau\ev{\ell_t}_\tau}$.
Consequently, we obtain the following thermodynamic speed limit:
\begin{equation}\label{eq:crn.csl2}
\tau\ge\frac{\mca{W}_1(\vb*{x}_0,\vb*{x}_\tau)}{\sqrt{\ev{\sigma_t}_\tau\ev{\ell_t}_\tau}}\eqqcolon\tau_2.
\end{equation}
Inequality \eqref{eq:crn.csl2} implies that the minimum time required to transform $\vb*{x}_0$ into $\vb*{x}_\tau$ is determined by the product of the thermodynamic and kinetic costs.

We numerically demonstrate the derived bounds in a cascade reaction network with $|\mca{S}|=10$ species and $|\mca{R}|=9$ reaction channels (see Fig.~\ref{fig:CRNEx}).
We also compare the results with a non-topological bound reported in Ref.~\cite{Yoshimura.2021.PRL}, which reads $\tau\ge\tau_3\coloneqq \mca{T}(\vb*{c}_0,\vb*{c}_\tau)/\sqrt{\ev{\sigma_t}_\tau\ev{d_t}_\tau}$.
Here, $\mca{T}$ denotes the total variation distance and $d_t\coloneqq(|\mca{S}|/8)\sum_{\rho,i}(\nu_i^{+\rho}-\nu_i^{-\rho})^2(J_t^{+\rho}+J_t^{-\rho})$ corresponds to the diffusion coefficient.
We calculate and plot the lower bounds $\tau_i\,(1\le i\le 3)$ in Fig.~\ref{fig:CRNEx}.
As shown, the topological speed limits $\tau\ge\tau_1\ge\tau_2$ are tight; especially, the bound $\tau\ge\tau_1$ is always saturated.
On the contrary, the non-topological bound $\tau\ge\tau_3$ is loose and does not provide a meaningful bound for the speed of the system.

\sectionprl{Application 2: Interacting bosonic systems}Next, we describe an application for quantum many-body bosonic systems.
We consider a model of bosons that hop on an arbitrary finite-dimensional lattice and interact with each other.
Let $\Lambda$ denote the set of all the sites in the lattice.
The Hamiltonian can be expressed in the following generic form:
\begin{equation}
H_t\coloneqq-\gamma\sum_{\ev{i,j}}(b_i^\dagger b_j+b_j^\dagger b_i)+\sum_{Z\subseteq\Lambda}h_Z.
\end{equation}
Here, the first summation is over neighboring lattice sites (which can be arbitrarily distant), $\gamma>0$ describes the boson mobility, $b_i$ and $b_i^\dagger$ are the bosonic creation and annihilation operators for site $i$, respectively, $\hat n_i\coloneqq b_i^\dagger b_i$ is the number operator, and $h_Z$ is an arbitrary function of $\{\hat n_i\}_{i\in Z}$.
Examples include the Bose--Hubbard model, given by $\sum_{Z\subseteq\Lambda}h_Z=(U/2)\sum_i\hat n_i(\hat n_i-1)-\mu\sum_i\hat n_i$, where $U$ and $\mu$ are real constants.
Note that the graph $G(\mca{V},\mca{E})$ of the bosonic system is identical to the lattice topology (i.e., $\mca{V}$ is the set of sites and $\mca{E}$ is the set of edges that connect the two neighboring sites).
The maximum vertex degree of the graph is denoted by $d_G$.

We assume that the bosonic system is weakly coupled to a Markovian thermal reservoir and can exchange particles with the reservoir, where the time evolution of the reduced density matrix is described by the Lindblad equation \cite{Lindblad.1976.CMP}:
\begin{equation}
\dot\varrho_t=-i[H_t,\varrho_t]+\sum_{i\in\Lambda}\qty(\mca{D}[L_{i,+}]+\mca{D}[L_{i,-}])\varrho_t,
\end{equation}
where $\mca{D}[L]\varrho\coloneqq L\varrho L^\dagger-(1/2)\{L^\dagger L,\varrho\}$ is the dissipator, $L_{i,+}=\sqrt{\gamma_{i,+}}b_i^\dagger$ and $L_{i,-}=\sqrt{\gamma_{i,-}}b_i$ are the jump operators that characterize the absorption and emission of bosons at site $i$, respectively.
Hereafter, we set $\hbar=1$ for simplicity.

We consider the vector of boson numbers occupied at each site, $x_i(t)=\tr{\hat n_i \varrho_t}$, and define the instantaneous total number of bosons as $\mca{N}_t\coloneqq\sum_{i\in\Lambda}x_i(t)$.
Using the relation $[b_i,\hat n_i]=b_i$, we can show that the time evolution of $x_i(t)$ can be expressed in the form of Eq.~\eqref{eq:det.eq} with $f_{i}(t)=\tr{L_{i,+}\varrho_tL_{i,+}^\dagger}-\tr{L_{i,-}\varrho_tL_{i,-}^\dagger}$ and $f_{ij}(t)=2\gamma\Im\qty[\tr{b_j^\dagger b_i\varrho_t}]$.
By inserting these terms into $\upsilon_{t,\lambda}$, we can immediately obtain the speed limit \eqref{eq:prim.sl.unnorm} for bosonic transport.

Next, we derive a more physically interpretable speed limit by upper bounding the velocity $\upsilon_{t,\lambda}$. To this end, we introduce two relevant physical quantities.
The first is the irreversible entropy production rate \cite{Landi.2021.RMP}, which is the sum of the entropic changes in the system and environment, defined as $\sigma_t\coloneqq\sigma_t^{\rm sys} + \sigma_t^{\rm env}$.
Here, $\sigma_t^{\rm sys}\coloneqq -\tr{\dot\varrho_t\ln\varrho_t}$ is the rate of von Neumann entropy of the bosonic system, and $\sigma_t^{\rm env}$ quantifies the heat dissipated to the environment as follows:
\begin{equation}
\sigma_t^{\rm env}\coloneqq\sum_i\qty(\tr{L_{i,+}\varrho_tL_{i,+}^\dagger}-\tr{L_{i,-}\varrho_tL_{i,-}^\dagger})\ln\frac{\gamma_{i,+}}{\gamma_{i,-}},
\end{equation}
where we have assumed the local detailed balance condition [that is, $\ln(\gamma_{i,+}/\gamma_{i,-})$ is related to the heat dissipation of the boson exchange at site $i$].
The second is quantum dynamical activity \cite{Hasegawa.2020.PRL,Vu.2022.PRL.TUR}, which quantifies the boson exchange frequency between the system and reservoir, given by
\begin{equation}
a_t\coloneqq\sum_i\qty(\tr{L_{i,+}\varrho_tL_{i,+}^\dagger}+\tr{L_{i,-}\varrho_tL_{i,-}^\dagger}).
\end{equation}
Using these quantities, we can prove that the velocity $\upsilon_{t,\lambda}$ is upper bounded as \cite{Supp.PhysRev}
\begin{equation}\label{eq:vt.ub.boson}
\upsilon_{t,\lambda}\le \gamma d_G\mca{N}_t+\lambda\frac{\sigma_t}{2}\Phi\qty(\frac{\sigma_t}{2a_t})^{-1},
\end{equation}
where $\Phi(x)$ is the inverse function of $x\tanh(x)$.
By combining Eqs.~\eqref{eq:prim.sl.unnorm} and \eqref{eq:vt.ub.boson}, we obtain the following speed limit:
\begin{equation}\label{eq:boson.qsl}
\tau\ge\frac{\mca{W}_{1,\lambda}({\vb*{x}}_0,{\vb*{x}}_\tau)}{\ev{\gamma d_G\mca{N}_t+\lambda{\sigma_t\Phi(\sigma_t/2a_t)^{-1}/2}}_\tau}.
\end{equation}
Equation \eqref{eq:boson.qsl} implies that the speed of bosonic transport is lower bounded by the lattice topology, boson mobility, and dissipation.
The bound also indicates that dissipative controls can help accelerate the bosonic transport.
The inequality \eqref{eq:boson.qsl} is valid for arbitrary initial states of the bosonic system. 

It is worthwhile discussing the vanishing coupling limit (i.e., the case where the system becomes isolated).
In this case, $\sigma_t=a_t=0$ and $\mca{N}_t=\mca{N}$ for all times.
Defining the boson concentration $\bar{x}_i(t)\coloneqq \mca{N}^{-1}x_i(t)$, we obtain $\sum_i\bar{x}_i(t)=1$.
By taking the $\lambda\to+\infty$ limit, Eq.~\eqref{eq:boson.qsl} is reduced to a simple speed limit for an isolated bosonic system:
\begin{equation}\label{eq:boson.qsl2}
\tau\ge\frac{\mca{W}_1(\bar{\vb*{x}}_0,\bar{\vb*{x}}_\tau)}{\gamma d_G}.
\end{equation}
Bound \eqref{eq:boson.qsl2} has a remarkable implication for bosonic transport.
Assume that all bosons are initially concentrated in a region $R_1$, and we want to transport all of them to a distinct region $R_2$ within a finite time $\tau$.
In this case, $\mca{W}_1(\bar{\vb*{x}}_0,\bar{\vb*{x}}_\tau)\ge{\rm dist}(R_1,R_2)$, where ${\rm dist}(R_1,R_2)$ denotes the length of the shortest path connecting the regions $R_1$ and $R_2$.
Therefore, Eq.~\eqref{eq:boson.qsl2} implies that transporting bosons always takes at least a time proportional to the distance between the two regions: $\tau\ge{{\rm dist}(R_1,R_2)}/(\gamma d_G)$, which cannot be obtained with conventional speed limits.
This statement holds for \emph{arbitrary} initial states, including the pure states considered in Ref.~\cite{Faupin.2022.PRL}.
While the Lieb--Robinson bounds \cite{Kuwahara.2021.PRL,Tran.2021.PRL,Faupin.2022.CMP,Yin.2022.PRX,Kuwahara.2022.arxiv} imply a linear light cone for the operator spreading, Eq.~\eqref{eq:boson.qsl2} provides a useful bound for the operational time required for bosonic transport.

\sectionprl{Proof of Eq.~\eqref{eq:prim.sl.unnorm}}We consider the time discretization of Eq.~\eqref{eq:det.eq} with time interval $\delta t=\tau/K$.
For each $k\in[0,K-1]$ and $t=k\delta t$, we have
\begin{equation}\label{eq:dis.det.eq}
x_i(t+\delta t)=x_i(t)+\delta t\qty[ f_i(t) + \sum_{j\in\mca{B}_i}f_{ij}(t)].
\end{equation}
Equation \eqref{eq:dis.det.eq} indicates that we can transform $\vb*{x}_{t}$ to $\vb*{x}_{t+\delta t}$ by adding $f_i(t)\delta t$ to $x_i(t)$ with cost $\lambda|f_i(t)|\delta t$ and exchanging $f_{ij}(t)\delta t$ between neighboring vertices $i$ and $j$ with cost $|f_{ij}(t)|\delta t$.
Such the transport plan takes the total cost of
\begin{equation}\label{eq:cost.short.time}
\qty(\lambda\sum_i|f_i(t)|+\sum_{\ev{i,j}\in\mca{E}}|f_{ij}(t)|)\delta t=\upsilon_{t,\lambda}\delta t,
\end{equation}
which should be larger than or equal to $\mca{W}_{1,\lambda}(\vb*{x}_{t},\vb*{x}_{t+\delta t})$.
Therefore, taking the sum of Eq.~\eqref{eq:cost.short.time} from $k=0$ to $k=K-1$ and applying the triangle inequality for $\mca{W}_{1,\lambda}$ yield
\begin{align}
\sum_{k=0}^{K-1}\upsilon_{t,\lambda}\delta t&\ge\mca{W}_{1,\lambda}(\vb*{x}_{0},\vb*{x}_\tau).\label{eq:short.time.tmp1}
\end{align}
By taking the $\delta t\to 0$ limit in Eq.~\eqref{eq:short.time.tmp1}, we obtain $\tau\ev{\upsilon_{t,\lambda}}_\tau\ge\mca{W}_{1,\lambda}(\vb*{x}_{0},\vb*{x}_\tau)$, from which Eq.~\eqref{eq:prim.sl.unnorm} is immediately derived.

\sectionprl{Conclusion}In this Letter, we derived the topological speed limit for vector states that accounts for the network structure in the underlying dynamics \cite{fnt4}.
The speed limit provides a tight bound for the operational time and insight into the system speed from a topological perspective.
We showed that the bound is applicable to various dynamics as long as the time evolution of the physical state can be described in terms of a graph.
Because our speed limit is derived in a general setting, we expect that it can be applied to obtain fundamental bounds for several other dynamics.

\begin{acknowledgments}
We thank Tomotaka Kuwahara, Andreas Dechant, and Koudai Sugimoto for fruitful discussions, Marius Lemm and Ryusuke Hamazaki for helpful communications, and Mai Dan Nguyen for the help in preparing the figures.
We also appreciate anonymous referees for valuable comments.
This work was supported by Grants-in-Aid for Scientific Research (JP19H05603 and JP19H05791).
\end{acknowledgments}

\end{document}


\title{Supplemental Material for ``Topological Speed Limit''}

\author{Tan Van Vu}

\affiliation{Department of Physics, Keio University, 3-14-1 Hiyoshi, Kohoku-ku, Yokohama 223-8522, Japan}

\author{Keiji Saito}

\affiliation{Department of Physics, Keio University, 3-14-1 Hiyoshi, Kohoku-ku, Yokohama 223-8522, Japan}

\begin{abstract}
This Supplemental Material describes the details of the analytical calculations presented in the main text and further applications of the topological speed limit for isolated and open quantum systems, measurement-induced quantum walk, and quantum communication. 
The equations and figure numbers are prefixed with S [e.g., Eq.~(S1) or Fig.~S1]. 
The numbers without this prefix [e.g., Eq.~(1) or Fig.~1] refer to the items in the main text.
\end{abstract}

\pacs{}
\maketitle

\tableofcontents

\section{Numerical calculation of the generalized Wasserstein distance}
The Wasserstein distance can be calculated using the linear programming method.
Mathematically, it can be formulated as the following minimization problem:
\begin{equation}\label{eq:lin.prob1}
\begin{array}{ll@{}}
\text{minimize}  & \tr{\msf{C}\pi}\\
\text{subject to}& \pi\vb*{1}=\vb*{y},~\pi^\top\vb*{1}=\vb*{x}\\
                 & \pi\ge 0,
\end{array}
\end{equation}
where $\msf{C}=[d_{mn}]$ is the matrix of transport cost and $\vb*{1}$ is the all-ones vector.

The generalized Wasserstein distance can be calculated in a similar way.
Note that it has been shown that the generalized Wasserstein distance can be achieved using only \emph{remove} and \emph{transport} operations (i.e., we do not need \emph{add} operation) \cite{Piccoli.2013.ARMA}.
Therefore, $\mca{W}_{1,\lambda}$ can be expressed as
\begin{align}
\mca{W}_{1,\lambda}(\vb*{x},\vb*{y})=\min\qty{\lambda(\|\vb*{x}-\tilde{\vb*{x}}\|_1+\|\vb*{y}-\tilde{\vb*{y}}\|_1)+\mca{W}_1(\tilde{\vb*{x}},\tilde{\vb*{y}})},
\end{align}
where the minimum is over all states $\tilde{\vb*{x}}$ and $\tilde{\vb*{y}}$ such that $\tilde{\vb*{x}}\le \vb*{x}$, $\tilde{\vb*{y}}\le \vb*{y}$, and $\|\tilde{\vb*{x}}\|_1=\|\tilde{\vb*{y}}\|_1$.
Here, $\vb*{x}\le \vb*{y}$ means that $x_n\le y_n$ for all $n$.
Consequently, computing $\mca{W}_{1,\lambda}$ is equivalent to solving the following minimization problem:
\begin{equation}\label{eq:lin.prob2}
\begin{array}{ll@{}}
\text{minimize}  & \tr{\msf{C}\pi}+\lambda\qty[\sum_n(x_n-\tilde{x}_n)+\sum_n(y_n-\tilde{y}_n)]\\
\text{subject to}& \pi\vb*{1}=\tilde{\vb*{y}},~\pi^\top\vb*{1}=\tilde{\vb*{x}}\\
                 & \tilde{\vb*{x}}\le\vb*{x},~\tilde{\vb*{y}}\le\vb*{y}\\
                 & \pi\ge 0.
\end{array}
\end{equation}
The linear programming problems in Eqs.~\eqref{eq:lin.prob1} and \eqref{eq:lin.prob2} can be efficiently solved using programming languages such as Python, Julia, or Mathematica.
A Mathematica code that computes the (generalized) Wasserstein distance can be found on GitHub \cite{Vu.2022.GH}.

\section{Speed limit for scalar observables}
Here we derive a speed limit for a scalar observable $\mca{O}_t$, defined in terms of state $\vb*{x}_t$ as
\begin{equation}
\mca{O}_t\coloneqq\sum_io_ix_i(t)=\vb*{o}^\top\vb*{x}_t,
\end{equation}
where $\vb*{o}=[o_1,\dots,o_N]^\top$ is a vector of real coefficients.
For convenience, we define the spectral norm and Lipschitz constant of vector $\vb*{o}$ as follows:
\begin{align}
\|\vb*{o}\|_\infty&\coloneqq\max_i|o_i|,\\
\|\vb*{o}\|_{\rm Lip}&\coloneqq\max_{\ev{i,j}\in\mca{E}}|o_i-o_j|.
\end{align}
We first consider the general case where external flows are present.
According to Prop.~\ref{prop:dual}, we obtain the following speed limit for observable $\mca{O}_t$ from Eq.~(\GenSL):
\begin{equation}\label{eq:obs.qsl}
\tau\ge\frac{|\mca{O}_\tau-\mca{O}_0|}{\max\qty{ \lambda^{-1}{\|\vb*{o}\|_\infty} , \|\vb*{o}\|_{\rm Lip} }\ev{\upsilon_{t,\lambda}}_\tau}.
\end{equation}
For the case that the external flows are absent, we can obtain a more compact speed limit by taking the $\lambda\to+\infty$ limit in Eq.~\eqref{eq:obs.qsl}, which reads
\begin{equation}
\tau\ge\frac{|\mca{O}_\tau-\mca{O}_0|}{\|\vb*{o}\|_{\rm Lip}\ev{\upsilon_{t}}_\tau}.
\end{equation}
This bound recovers the result reported in Ref.~\cite{Hamazaki.2022.PRXQ} [see Eq.~(19) therein].

\begin{proposition}\label{prop:dual}
For arbitrary states $\vb*{x}$, $\vb*{y}$ and a real vector $\vb*{o}$, the following inequality holds:
\begin{equation}\label{eq:prop.dual}
\qty| \vb*{o}^\top( \vb*{x} - \vb*{y} ) | \le \max\qty{ \lambda^{-1}{\|\vb*{o}\|_\infty} , \|\vb*{o}\|_{\rm Lip} }\mca{W}_{1,\lambda}(\vb*{x},\vb*{y}).
\end{equation}
\begin{proof}
Let $\tilde{\vb*{x}}$ and $\tilde{\vb*{y}}$ be two states that realize $\mca{W}_{1,\lambda}(\vb*{x},\vb*{y})$. In other words, we have
\begin{equation}
\mca{W}_{1,\lambda}(\vb*{x},\vb*{y})=\lambda( \|\vb*{x}-\tilde{\vb*{x}}\|_1 + \|\vb*{y}-\tilde{\vb*{y}}\|_1 ) + \mca{W}_1(\tilde{\vb*{x}},\tilde{\vb*{y}}).
\end{equation}
Then, by applying the triangle inequality and the Kantorovich--Rubinstein duality formula \cite{Villani.2008}
\begin{equation}
\mca{W}_1(\tilde{\vb*{x}},\tilde{\vb*{y}})=\max\qty{\vb*{\phi}^\top (\tilde{\vb*{x}} - \tilde{\vb*{y}}) ~|~\|\vb*{\phi}\|_{\rm Lip}\le 1},
\end{equation}
we can prove Eq.~\eqref{eq:prop.dual} as follows:
\begin{align}
&\qty| \vb*{o}^\top( \vb*{x} - \vb*{y} ) |\notag\\
&\le \qty| \vb*{o}^\top( \vb*{x} - \tilde{\vb*{x}} ) | + \qty| \vb*{o}^\top( \vb*{y} - \tilde{\vb*{y}} ) | + \qty| \vb*{o}^\top( \tilde{\vb*{x}} - \tilde{\vb*{y}} ) | \notag\\
&\le \|\vb*{o}\|_\infty(\|\vb*{x} - \tilde{\vb*{x}}\|_1+\|\vb*{y} - \tilde{\vb*{y}}\|_1)+\|\vb*{o}\|_{\rm Lip}\mca{W}_1(\tilde{\vb*{x}},\tilde{\vb*{y}})\notag\\
&\le \max\qty{ \lambda^{-1}{\|\vb*{o}\|_\infty} , \|\vb*{o}\|_{\rm Lip} }\mca{W}_{1,\lambda}(\vb*{x},\vb*{y}).
\end{align}
\end{proof}
\end{proposition}

\section{Chemical reaction networks}

\subsection{Graph construction for closed reaction networks}
Here we describe in detail the construction of an undirected graph $G(\mca{V},\mca{E})$ for the closed chemical system from which the Wasserstein distance can be defined immediately.
Notice the total mass conservation $\sum_im_i(\nu_i^{+\rho}-\nu_i^{-\rho})=0$ for any $\rho$.
The set of vertices is defined as $\mca{V}=\mca{S}$, where the vertex $i$ corresponds to the species $X_i$.
For each reaction channel $\rho$, define $\mca{S}^\rho_+\coloneqq\{i\,|\,\nu_i^{+\rho}>\nu_i^{-\rho}\}$ and $\mca{S}^\rho_-\coloneqq\{i\,|\,\nu_i^{+\rho}<\nu_i^{-\rho}\}$. Evidently, $\sum_{i\in\mca{S}^\rho_+}m_i(\nu_i^{+\rho}-\nu_i^{-\rho})=\sum_{i\in\mca{S}^\rho_-}m_i(\nu_i^{-\rho}-\nu_i^{+\rho})$.
According to Prop.~\ref{prop:zmn}, there exists a matrix $\msf{Z}^\rho=[z_{ij}^\rho]\in\mbb{R}^{|\mca{S}|\times|\mca{S}|}$ such that
\begin{align}
\sum_{j\in\mca{S}^\rho_-}z_{ij}^\rho&=m_i(\nu_i^{+\rho}-\nu_i^{-\rho}),~\forall i\in\mca{S}^\rho_+,\\
\sum_{i\in\mca{S}^\rho_+}z_{ij}^\rho&=m_j(\nu_j^{-\rho}-\nu_j^{+\rho}),~\forall j\in\mca{S}^\rho_-,\\
z_{ji}^\rho&=-z_{ij}^\rho\le 0,~\forall i\in\mca{S}^\rho_+,~j\in\mca{S}^\rho_-,\\
z_{ij}^\rho&=0,~\text{otherwise}.
\end{align}
For each $z_{ij}^\rho\neq 0$, we add an undirected edge $\ev{i,j}$ to $\mca{E}$ (see Fig.~\ref{fig:GraphIllust} for illustration).
Note that the existence of $\msf{Z}^\rho$ may not be unique; nevertheless, an instance of $\msf{Z}^\rho$ can always be found such that at most $|\mca{S}^\rho_+|+|\mca{S}^\rho_-|-1$ edges are added to the graph for each reaction channel $\rho$.
Repeating this process for all $\rho\in\mca{R}$, we can readily obtain the graph $G$.

\begin{figure}[t]
\centering
\includegraphics[width=0.95\linewidth]{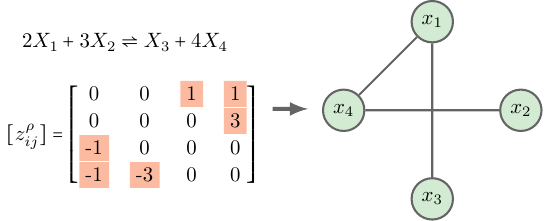}
\protect\caption{Illustration of the process of adding edges to the graph $G$ for a reaction channel $\rho$. Here, for simplicity, the molar mass is set to unity for all species (i.e., $m_i=1$ for any $i$).
For each $z_{ij}^\rho\neq 0$, an edge between $i$ and $j$ is added to $G$.}\label{fig:GraphIllust}
\end{figure}

By this construction, we can verify that
\begin{align}
	\dot{x}_i(t)&=\sum_{\rho}m_i(\nu^{+\rho}_{i}-\nu^{-\rho}_{i})J^{\rho}_t\notag\\
	&=\sum_{j\in\mca{B}_i}\sum_{\rho}z_{ij}^\rho J^{\rho}_t\notag\\
	&=\sum_{j\in\mca{B}_i}f_{ij}(t),
\end{align}
where $f_{ij}(t)=\sum_{\rho}z_{ij}^\rho J^{\rho}_t$.

\subsection{Derivation of Eq.~(\VeloUpB)}
Furthermore, by applying the triangle inequality, we can prove that the velocity is upper bounded as follows:
\begin{align}
\upsilon_t&=\sum_{\ev{i,j}\in\mca{E}}|f_{ij}(t)|\notag\\
&=\frac{1}{2}\sum_i\sum_{j\in\mca{B}_i}|f_{ij}(t)|\notag\\
&\le \frac{1}{2}\sum_i\sum_{j\in\mca{B}_i}\sum_\rho |z_{ij}^\rho||J_t^\rho|\notag\\
&=\frac{1}{2}\sum_\rho|J_t^\rho|\sum_i\sum_{j\in\mca{B}_i}|z_{ij}^\rho|\notag\\
&=\frac{1}{2}\sum_\rho|J_t^\rho|\sum_im_i|\nu^{+\rho}_{i}-\nu^{-\rho}_{i}|\notag\\
&=\sum_\rho\nu^\rho|J_t^\rho|,
\end{align}
where $\nu^\rho=(1/2)\sum_im_i|\nu^{+\rho}_{i}-\nu^{-\rho}_{i}|$.

\subsection{Generalization to generic reaction networks}
Here we derive the topological speed limit for generic (open) chemical reaction networks, in which the total mass concentration may not be conversed.
To this end, we construct the graph corresponding to the time evolution of $\vb*{x}_t$.
Once done, the topological speed limit Eq.~(\GenSL) is immediately obtained.
We need only consider reaction channels that do not conserve the total mass concentration because the conservative reactions have already been considered above.

Analogously, for each reaction channel $\rho$, we define $\mca{S}^\rho_+\coloneqq\{i\,|\,\nu_i^{+\rho}>\nu_i^{-\rho}\}$ and $\mca{S}^\rho_-\coloneqq\{i\,|\,\nu_i^{+\rho}<\nu_i^{-\rho}\}$. A non-conservative reaction channel implies $\sum_{i\in\mca{S}^\rho_+}m_i(\nu_i^{+\rho}-\nu_i^{-\rho})\neq\sum_{i\in\mca{S}^\rho_-}m_i(\nu_i^{-\rho}-\nu_i^{+\rho})$.
Then, according to Prop.~\ref{prop:zmn}, there exist a matrix $\msf{Z}^\rho=\{z_{ij}^\rho\}\in\mbb{R}^{|\mca{S}|\times|\mca{S}|}$ and a vector $\vb*{\mu}^\rho\in\mbb{R}^{|\mca{S}|}$ such that
\begin{align}
\sum_{j\in\mca{S}^\rho_-}z_{ij}^\rho&=\mu_i^{\rho}\le m_i(\nu_i^{+\rho}-\nu_i^{-\rho}),~\forall i\in\mca{S}^\rho_+,\\
\sum_{i\in\mca{S}^\rho_+}z_{ij}^\rho&=-\mu_j^{\rho}\le m_i(\nu_j^{-\rho}-\nu_j^{+\rho}),~\forall j\in\mca{S}^\rho_-,\\
z_{ji}^\rho&=-z_{ij}^\rho\le 0,~\forall i\in\mca{S}^\rho_+,~j\in\mca{S}^\rho_-,\\
z_{ij}^\rho&=0,~\text{otherwise},\\
\mu_i^\rho&\ge 0,~\forall i\in\mca{S}_+^\rho,\\
\mu_j^\rho&\le 0,~\forall j\in\mca{S}_-^\rho,\\
\mu_i^\rho&= 0,~\text{otherwise},\\
\sum_{i\in\mca{S}^\rho_+}\mu_i^{\rho}&=\min\qty{ \sum_{i\in\mca{S}^\rho_+}m_i(\nu_i^{+\rho}-\nu_i^{-\rho}), \, \sum_{i\in\mca{S}^\rho_-}m_i(\nu_i^{-\rho}-\nu_i^{+\rho}) }.
\end{align}
For each $z_{ij}^\rho\neq 0$, we add an undirected edge $\ev{i,j}$ to $\mca{E}$.
Again, the existence of $\msf{Z}^\rho$ and $\vb*{\mu}^{\rho}$ may not be unique; nevertheless, one can always find coefficients such that at most $|\mca{S}^\rho_+|+|\mca{S}^\rho_-|-1$ edges are added to the graph for each reaction channel $\rho$.
Repeating this process for all $\rho\in\mca{R}$, we readily obtain the graph $G$.

With the above construction, the time evolution of the reaction network can be expressed in the form of Eq.~(\DetEq) with
\begin{align}
f_i(t)&=\sum_\rho [m_i(\nu_i^{+\rho}-\nu_i^{-\rho})-\mu_i^{\rho}]J_t^\rho,\\
f_{ij}(t)&=\sum_\rho z_{ij}^\rho J_t^\rho.
\end{align}
For convenience, we define
\begin{align}
\chi^\rho&\coloneqq\min\qty{ \sum_{i\in\mca{S}^\rho_+}m_i(\nu_i^{+\rho}-\nu_i^{-\rho}), \, \sum_{i\in\mca{S}^\rho_-}m_i(\nu_i^{-\rho}-\nu_i^{+\rho}) },\\
\eta^\rho&\coloneqq\qty|\sum_{i\in\mca{S}^\rho_+}m_i(\nu_i^{+\rho}-\nu_i^{-\rho})-\sum_{i\in\mca{S}^\rho_-}m_i(\nu_i^{-\rho}-\nu_i^{+\rho})|.
\end{align}
Using these quantities, the velocity term $\upsilon_{t,\lambda}$ can be upper bounded as
\begin{align}
\upsilon_{t,\lambda}&=\lambda\sum_i|\sum_\rho [m_i(\nu_i^{+\rho}-\nu_i^{-\rho})-\mu_i^{\rho}]J_t^\rho|+\sum_{\ev{i,j}\in\mca{E}}|\sum_\rho z_{ij}^\rho J_t^\rho|\notag\\
&\le\sum_\rho|J_t^\rho|\qty(\lambda\sum_i|m_i(\nu_i^{+\rho}-\nu_i^{-\rho})-\mu_i^{\rho}|+\sum_{\ev{i,j}\in\mca{E}}|z_{ij}^\rho|)\notag\\
&=\sum_\rho(\lambda\eta^\rho + \chi^\rho)|J_t^\rho|.
\end{align}
Consequently, we obtain the following speed limit for generic reaction networks:
\begin{equation}\label{eq:gcrn.sl1}
\tau\ge\tau_1\coloneqq\frac{\mca{W}_{1,\lambda}(\vb*{x}_0,\vb*{x}_\tau)}{\ev{\sum_\rho(\lambda\eta^\rho + \chi^\rho)|J_t^\rho|}_\tau}.
\end{equation}
Define the generalized kinetic quantity $\ell_{t,\lambda}$ as
\begin{equation}
\ell_{t,\lambda}\coloneqq\sum_\rho(\lambda\eta^\rho + \chi^\rho)^2\frac{J_t^{-\rho}-J_t^{+\rho}}{\ln(J_t^{-\rho}/J_t^{+\rho})},
\end{equation}
we can prove that
\begin{equation}\label{eq:gcrn.tmp1}
\ev{\sum_\rho(\lambda\eta^\rho + \chi^\rho)|J_t^\rho|}_\tau\le\ev{\sqrt{\sigma_t\ell_{t,\lambda}}}_\tau\le\sqrt{\ev{\sigma_t}_\tau\ev{\ell_{t,\lambda}}_\tau}.
\end{equation}
Combining Eqs.~\eqref{eq:gcrn.sl1} and \eqref{eq:gcrn.tmp1} yields the following thermodynamic speed limit:
\begin{align}
\tau\ge\tau_2\coloneqq\frac{\mca{W}_{1,\lambda}(\vb*{x}_0,\vb*{x}_\tau)}{\sqrt{\ev{\sigma_t}_\tau\ev{\ell_{t,\lambda}}_\tau}},~\forall\lambda\ge 0.
\end{align}

So far, we have derived the speed limits for the mass concentration $\vb*{x}_t$. It is worth noting that speed limits for the molar concentration $\vb*{c}_t$ can be readily obtained by simply setting $m_i=1$ for all species $X_i$. Note that by this setting, the graph $G$ and coefficients $\chi^\rho$ and $\eta^\rho$ may change accordingly.
Eventually, the following inequalities hold:
\begin{equation}
\tau\ge\tau_1'\coloneqq\frac{\mca{W}_{1,\lambda}(\vb*{c}_0,\vb*{c}_\tau)}{\ev{\sum_\rho(\lambda\eta^\rho + \chi^\rho)|J_t^\rho|}_\tau}\ge \tau_2'\coloneqq\frac{\mca{W}_{1,\lambda}(\vb*{c}_0,\vb*{c}_\tau)}{\sqrt{\ev{\sigma_t}_\tau\ev{\ell_{t,\lambda}}_\tau}}.
\end{equation}
Next, we show that the $\lambda=1/2$ speed limit is always tighter than a thermodynamic bound reported in Ref.~\cite{Yoshimura.2021.PRL}, which reads
\begin{equation}
\tau\ge\tau_3\coloneqq\frac{\mca{T}(\vb*{c}_0,\vb*{c}_\tau)}{\sqrt{\ev{\sigma_t}_\tau\ev{d_t}_\tau}},
\end{equation}
where $\mca{T}(\vb*{c}_0,\vb*{c}_\tau)\coloneqq\|\vb*{c}_0-\vb*{c}_\tau\|_1/2$ is the total variation distance and $d_t\coloneqq(|\mca{S}|/8)\sum_{\rho,i}(\nu_i^{+\rho}-\nu_i^{-\rho})^2(J_t^{+\rho}+J_t^{-\rho})$ corresponds to the diffusion coefficient.
Note that $(1/2)\eta^\rho + \chi^\rho=(1/2)\sum_i|\nu_i^{+\rho}-\nu_i^{-\rho}|$.
Applying the Cauchy--Schwarz inequality and the inequality $(x-y)/\ln(x/y)\le (x+y)/2$, we have
\begin{align}
\ell_{t,1/2}&=\frac{1}{4}\sum_{\rho}\qty(\sum_i|\nu_i^{+\rho}-\nu_i^{-\rho}|)^2\frac{J^{-\rho}_t-J^{+\rho}_t}{\ln(J^{-\rho}_t/J^{+\rho}_t)}\notag\\
&\le\frac{|\mca{S}|}{8}\sum_{\rho,i}(\nu_i^{+\rho}-\nu_i^{-\rho})^2(J^{-\rho}_t+J^{+\rho}_t)\notag\\
&=d_t.
\end{align}
By combining the above inequality and the relation $\mca{W}_{1,1/2}(\vb*{c}_0,\vb*{c}_\tau)=\mca{T}(\vb*{c}_0,\vb*{c}_\tau)$, we readily obtain the following hierarchical relationship:
\begin{equation}
\tau_1'\ge\tau_2'\ge\tau_3.
\end{equation}

\begin{proposition}\label{prop:zmn}
Let $\vb*{a}=[a_1,\dots,a_n]^\top$ and $\vb*{b}=[b_1,\dots,b_m]^\top$ be two vectors of positive numbers.
Then there exist matrix $\msf{Z}=[z_{ij}]\in\mbb{R}_{\ge 0}^{n\times m}$ and nonnegative vectors $\tilde{\vb*{a}}=[\tilde{a}_1,\dots,\tilde{a}_n]^\top$ and $\tilde{\vb*{b}}=[\tilde{b}_1,\dots,\tilde{b}_m]^\top$ such that the following conditions are satisfied:
\begin{align}
\sum_{j=1}^mz_{ij}&=\tilde{a}_i\le a_i,~\forall 1\le i\le n,\label{eq:prop.zmn.tmp1}\\
\sum_{i=1}^nz_{ij}&=\tilde{b}_j\le b_j,~\forall 1\le j\le m,\label{eq:prop.zmn.tmp2}\\
\sum_i\tilde{a}_i&=\min\qty{ \sum_ia_i,\sum_jb_j }.\label{eq:prop.zmn.tmp3}
\end{align}
\end{proposition}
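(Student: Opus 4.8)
The plan is to recast the statement as the feasibility of a transportation problem with flexible demands and then simply write down an explicit feasible point, so that no actual optimization is needed. Set $A\coloneqq\sum_{i=1}^n a_i$ and $B\coloneqq\sum_{j=1}^m b_j$; these are strictly positive because all $a_i,b_j>0$. Let $M\coloneqq\max\{A,B\}$. The guiding observation is that the mass that must be transported is exactly $\min\{A,B\}=AB/M$, which is precisely the total mass of the rank-one (outer-product) coupling $a_ib_j$ rescaled by $1/M$.

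First I would define the candidate matrix $z_{ij}\coloneqq a_ib_j/M$ for all $i,j$. Since $a_i,b_j>0$ and $M>0$, every entry is positive, so $\msf{Z}\in\mbb{R}_{\ge 0}^{n\times m}$. I then set $\tilde a_i\coloneqq\sum_{j=1}^m z_{ij}=a_iB/M$ and $\tilde b_j\coloneqq\sum_{i=1}^n z_{ij}=b_jA/M$, which makes the row- and column-sum equalities in \eqref{eq:prop.zmn.tmp1} and \eqref{eq:prop.zmn.tmp2} hold by construction, with $\tilde{\vb*{a}},\tilde{\vb*{b}}$ manifestly nonnegative. It then remains to check the two upper bounds and the total-mass identity: $\tilde a_i\le a_i$ is equivalent to $B/M\le 1$, which holds because $M\ge B$; symmetrically $\tilde b_j\le b_j$ holds because $M\ge A$; and $\sum_i\tilde a_i=(\sum_i a_i)(B/M)=AB/\max\{A,B\}=\min\{A,B\}$, which is the claimed value in \eqref{eq:prop.zmn.tmp3} (and equals $\sum_j\tilde b_j$, as it must). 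This completes the proof.

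There is essentially no obstacle here: the only point needing a moment of care is that one closed formula must cover both branches of the minimum, which is why the normalization is by $\max\{A,B\}$ rather than by $\min\{A,B\}$ or $\sqrt{AB}$ --- with the maximum in the denominator the bounds $\tilde a_i\le a_i$ and $\tilde b_j\le b_j$ hold simultaneously. Alternatively one could deduce existence from a max-flow/min-cut or Hall-type feasibility criterion for transportation polytopes, but the explicit coupling is shorter and self-contained. If, in addition, one wants the sparse version invoked afterwards --- that $\msf{Z}$ may be chosen with at most $n+m-1$ nonzero entries --- I would instead fill $\msf{Z}$ greedily by a northwest-corner rule, whose support is a forest and hence has at most $n+m-1$ edges; the Proposition as stated, however, already follows from the rank-one coupling above.
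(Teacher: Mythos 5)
Your proof is correct, and it takes a genuinely different route from the paper. You write down the explicit rank-one coupling $z_{ij}=a_ib_j/\max\{A,B\}$ with $A=\sum_ia_i$, $B=\sum_jb_j$, and all three conditions follow by direct computation; the key point that normalizing by the maximum makes both families of upper bounds $\tilde a_i\le a_i$ and $\tilde b_j\le b_j$ hold simultaneously is exactly right, and the total mass $AB/\max\{A,B\}=\min\{A,B\}$ comes out as required. The paper instead proceeds by induction on $n+m$, peeling off the smallest entry at each step --- essentially the northwest-corner rule you mention as an alternative. The trade-off is the one you already identify: the inductive/greedy construction produces a matrix $\msf{Z}$ supported on at most $n+m-1$ entries, and the surrounding text actually uses this sparsity when it asserts that at most $|\mca{S}^\rho_+|+|\mca{S}^\rho_-|-1$ edges need to be added to the graph for each reaction channel, whereas your rank-one coupling is fully dense with $nm$ nonzero entries. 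For the proposition as literally stated your argument is complete and shorter; if you wanted your proof to also support the edge-count claim made after the proposition, you would need to carry out the northwest-corner variant (or prune the rank-one coupling) rather than merely remark on it.
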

\begin{proof}
Without loss of generality, we can assume that $a_1\le\dots\le a_n$ and $b_1\le\dots\le b_m$.
We prove by induction on $k=m+n\ge 2$.
In the case of $k=2$ (i.e., $m=n=1$), we can set $z_{11}=\tilde{a}_1=\tilde{b}_1=\min\{a_1,b_1\}$.
Suppose that the statement holds for all $k\le\bar{k}$.
We consider an arbitrary case with $k=\bar{k}+1$.
Assume that $a_1\le b_1$. If $n=1$ then we can choose $z_{11}=a_1$, $z_{1j}=0$ for all $j>1$, $\tilde{\vb*{a}}=[a_1]$, and $\tilde{\vb*{b}}=[a_1,0,\dots,0]^\top$.
If $n\ge 2$, then consider two vectors $\vb*{a}'=[a_2,\dots,a_n]^\top$ and $\vb*{b}'=[b_1-a_1,b_2,\dots,b_m]^\top$. There exist $\msf{Z}'$, $\tilde{\vb*{a}}'$, and $\tilde{\vb*{b}}'$ such that
\begin{align}
\sum_{j=1}^mz_{ij}'&=\tilde{a}_i'\le a_i',~\forall 1\le i\le n-1,\\
\sum_{i=1}^{n-1}z_{ij}'&=\tilde{b}_j'\le b_j',~\forall 1\le j\le m,\\
\sum_i\tilde{a}_i'&=\min\qty{ \sum_ia_i',\sum_jb_j' }.
\end{align}
We construct $z_{11}=a_1$, $z_{1j}=0$ for all $j>1$, $z_{ij}=z_{(i-1)j}'$ for all $i\ge 2$, $\tilde{\vb*{a}}=[a_1,\tilde{a}_1',\dots,\tilde{a}_{n-1}']^\top$, and $\tilde{\vb*{b}}=[a_1+\tilde{b}_1',\tilde{b}_2',\dots,\tilde{b}_{m}']^\top$.
It is easy to verify that this combination satisfies all conditions \eqref{eq:prop.zmn.tmp1}, \eqref{eq:prop.zmn.tmp2}, and \eqref{eq:prop.zmn.tmp3}.
\end{proof}
For example, for two vectors $\vb*{a}=[4,5]^\top$ and $\vb*{b}=[1,2,3]^\top$, we can construct an instance of matrix $\msf{Z}$ and vectors $\tilde{\vb*{a}},\tilde{\vb*{b}}$ as follows:
\begin{align}
\msf{Z}&=\begin{bmatrix}
	1 & 2 & 1\\
	0 & 0 & 2
\end{bmatrix},~\tilde{\vb*{a}}=[4,2]^\top,~\tilde{\vb*{b}}=[1,2,3]^\top.
\end{align}

\section{Bosonic transport}
\subsection{Time evolution of boson number}
Using the relation $[b_i,\hat n_i]=b_i$ and $[\hat n_i,h_Z]=0$, we can calculate as follows:
\begin{align}
	-i\tr{\hat n_i [H_t,\varrho_t]}&=-i\tr{[\hat n_i,H_t]\varrho_t}\notag\\
	&=i\gamma\sum_{j\in\mca{B}_i}\tr{[\hat n_i,b_i^\dagger b_j+b_j^\dagger b_i]\varrho_t}\notag\\
	&=i\gamma\sum_{j\in\mca{B}_i}\tr{(b_i^\dagger b_j-b_j^\dagger b_i)\varrho_t}\notag\\
	&=2\gamma\sum_{j\in\mca{B}_i}\Im\qty[\tr{b_j^\dagger b_i\varrho_t}].
\end{align}
Similarly, by noting that $[b_i,b_i^\dagger]=1$, we have
\begin{align}
&\tr{\hat n_i\sum_{j\in\Lambda}\qty(\mca{D}[L_{j,+}]+\mca{D}[L_{j,-}])\varrho_t}\notag\\
&=\tr{\hat n_i\sum_{j\in\Lambda}\gamma_{j,+}\qty(b_j^\dagger\varrho_tb_j-\{b_jb_j^\dagger,\varrho_t\}/2)}\notag\\
&+\tr{\hat n_i\sum_{j\in\Lambda}\gamma_{j,-}\qty(b_j\varrho_tb_j^\dagger-\{b_j^\dagger b_j,\varrho_t\}/2)}\notag\\
&=\tr{\hat n_i\gamma_{i,+}\qty(b_i^\dagger\varrho_tb_i-\{\hat n_i+1,\varrho_t\}/2)}\notag\\
&+\tr{\hat n_i\gamma_{i,-}\qty(b_i\varrho_tb_i^\dagger-\{\hat n_i,\varrho_t\}/2)}\notag\\
&=\gamma_{i,+}\tr{b_i^\dagger\varrho_t(b_i+\hat n_ib_i)-\{\hat n_i^2+\hat n_i,\varrho_t\}/2}\notag\\
&+\gamma_{i,-}\tr{(b_i\hat n_i - b_i)\varrho_tb_i^\dagger-\{\hat n_i^2,\varrho_t\}/2}\notag\\
&=\gamma_{i,+}\tr{b_i^\dagger\varrho_tb_i}-\gamma_{i,-}\tr{b_i\varrho_tb_i^\dagger}\notag\\
&=\tr{L_{i,+}\varrho_tL_{i,+}^\dagger}-\tr{L_{i,-}\varrho_tL_{i,-}^\dagger}.
\end{align}
Taking the time derivative of $x_i(t)=\tr{\hat n_i \varrho_t}$, we can calculate the time evolution of $x_i(t)$ as follows:
\begin{align}
\dot{x}_i(t)&=\tr{\hat n_i \dot{\varrho}_t}\notag\\
	&=\tr{-i\hat n_i [H_t,\varrho_t]+\hat n_i\sum_{i\in\Lambda}\qty(\mca{D}[L_{i,+}]+\mca{D}[L_{i,-}])\varrho_t}\notag\\
	&=2\gamma\sum_{j\in\mca{B}_i}\Im\qty[\tr{b_j^\dagger b_i\varrho_t}]\notag\\
	&+\tr{L_{i,+}\varrho_tL_{i,+}^\dagger}-\tr{L_{i,-}\varrho_tL_{i,-}^\dagger}\notag\\
	&=f_i(t)+\sum_{j\in\mca{B}_i}f_{ij}(t),
\end{align}
where 
\begin{align}
	f_i(t)&=\tr{L_{i,+}\varrho_tL_{i,+}^\dagger}-\tr{L_{i,-}\varrho_tL_{i,-}^\dagger},\\
	f_{ij}(t)&=2\gamma\Im\qty[\tr{b_j^\dagger b_i\varrho_t}].
\end{align}
It can be verified that $f_{ij}(t)=-f_{ji}(t)$.

\subsection{Derivation of Eq.~(\BosonVtUp)}
Applying the Cauchy--Schwarz inequality, we can upper bound $|f_{ij}(t)|$ as follows:
\begin{align}
|f_{ij}(t)|&\le 2\gamma\sqrt{\tr{b_i^\dagger b_i\varrho_t}\tr{b_j^\dagger b_j\varrho_t}}\notag\\
&\le \gamma\qty( \tr{b_i^\dagger b_i\varrho_t} + \tr{b_j^\dagger b_j\varrho_t} )\notag\\
&=\gamma[x_i(t)+x_j(t)].
\end{align}
Taking the sum over all edges $\ev{i,j}\in\mca{E}$, we obtain
\begin{align}\label{eq:vt.ub.boson.tmp1}
\sum_{\ev{i,j}}|f_{ij}(t)|&\le \sum_{\ev{i,j}}\gamma[x_i(t)+x_j(t)]\notag\\
&\le\gamma d_G\sum_ix_i(t)=\gamma d_G\mca{N}_t.
\end{align}
In addition, according to Prop.~\ref{prop:current.bound}, we have
\begin{align}\label{eq:vt.ub.boson.tmp2}
\sum_i|f_i(t)|&=\sum_i\qty|\tr{L_{i,+}\varrho_tL_{i,+}^\dagger}-\tr{L_{i,-}\varrho_tL_{i,-}^\dagger}|\notag\\
&\le\frac{\sigma_t}{2}\Phi\qty(\frac{\sigma_t}{2a_t})^{-1}.
\end{align}
By combining the inequalities \eqref{eq:vt.ub.boson.tmp1} and \eqref{eq:vt.ub.boson.tmp2}, the velocity $\upsilon_{t,\lambda}$ can be upper bounded as
\begin{equation}
\upsilon_{t,\lambda}\le \gamma d_G\mca{N}_t+\lambda\frac{\sigma_t}{2}\Phi\qty(\frac{\sigma_t}{2a_t})^{-1}.
\end{equation}

\begin{figure*}[t]
\centering
\includegraphics[width=1\linewidth]{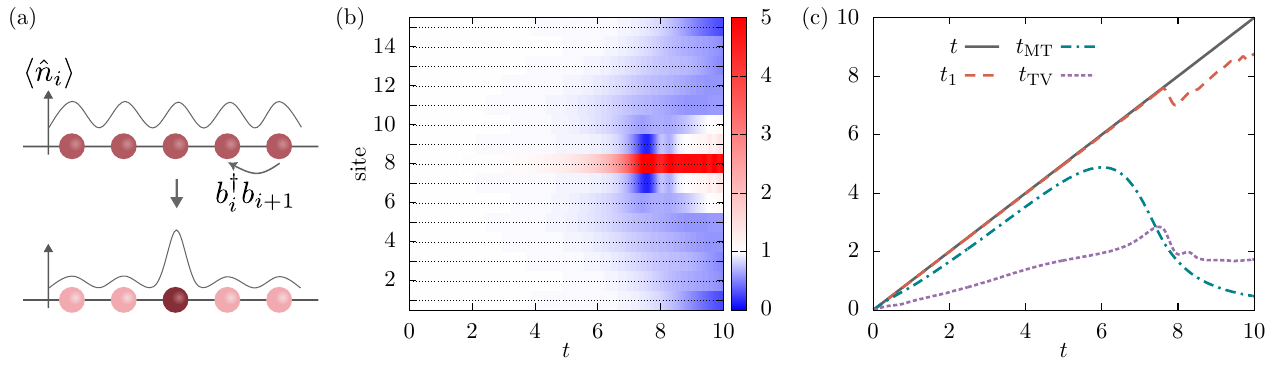}
\protect\caption{(a) Schematic of the one-dimensional Bose--Hubbard chain with a finite length. Bosons can hop from site $i$ to $i + 1$ and vice versa. (b) Time variation of the boson number at each site $i\in[1,15]$. The bosonic system is initiated in the ground state of $H_0$, in which bosons are almost uniformly distributed. After a period of time $\tau$, bosons tend to gather at the central site. (c) Numerical verification of the speed limits. The operational time $t$, topological bound $t_1$, and non-topological bounds $t_{\rm MT}$ and $t_{\rm TV}$ are depicted by solid, dashed, and dash-dotted and dotted lines, respectively. The parameters are fixed at $U_0=10$, $\gamma=1$, and $U_\tau=[10,9.5,8.5,7, 5, 2, 0.5, -5, 0.5, 2, 5, 7, 8.5, 9.5, 10]^\top$.}\label{fig:BosonChain}
\end{figure*}

\begin{proposition}\label{prop:current.bound}
The following inequality holds:
\begin{equation}
\sum_i\qty|\tr{L_{i,+}\varrho_tL_{i,+}^\dagger}-\tr{L_{i,-}\varrho_tL_{i,-}^\dagger}|\le{\frac{\sigma_t}{2}\Phi\qty(\frac{\sigma_t}{2a_t})^{-1}},
\end{equation}
where $\Phi(x)$ is the inverse function of $x\tanh(x)$.
\end{proposition}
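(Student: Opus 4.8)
The plan is to reduce the bound to per-channel quantities and then combine them with the Cauchy--Schwarz inequality and the logarithmic-mean bound $(x-y)/\ln(x/y)\le(x+y)/2$ that already appears in the section on reaction networks.

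First I would abbreviate $R_{i,\pm}\coloneqq\tr{L_{i,\pm}\varrho_t L_{i,\pm}^\dagger}\ge 0$ for the forward and backward jump rates, so that the left-hand side is $\sum_i|R_{i,+}-R_{i,-}|$, while the entropy production rate and the dynamical activity from the main text read $\sigma_t=\sum_i(R_{i,+}-R_{i,-})\ln(R_{i,+}/R_{i,-})$ and $a_t=\sum_i(R_{i,+}+R_{i,-})$. The key elementary fact is that $\Delta_i\coloneqq R_{i,+}-R_{i,-}$ and $\ell_i\coloneqq\ln(R_{i,+}/R_{i,-})$ always carry the same sign, so every summand of $\sigma_t$ is nonnegative and both $\Delta_i/\ell_i$ and $\Delta_i\ell_i$ are nonnegative.

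Second, using the splitting $|\Delta_i|=\sqrt{\Delta_i/\ell_i}\,\sqrt{\Delta_i\ell_i}$ and the Cauchy--Schwarz inequality, I would obtain
\[
\qty(\sum_i|R_{i,+}-R_{i,-}|)^2\le\qty(\sum_i\frac{\Delta_i}{\ell_i})\qty(\sum_i\Delta_i\ell_i)=\qty(\sum_i\frac{R_{i,+}-R_{i,-}}{\ln(R_{i,+}/R_{i,-})})\sigma_t .
\]
Applying the logarithmic-mean inequality termwise with $x=R_{i,+}$ and $y=R_{i,-}$ gives $\Delta_i/\ell_i\le(R_{i,+}+R_{i,-})/2$, hence $\sum_i\Delta_i/\ell_i\le a_t/2$; substituting and taking the square root yields $\sum_i|R_{i,+}-R_{i,-}|\le\sqrt{\sigma_t a_t/2}$, as claimed.

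The single delicate point is channels with $R_{i,+}=R_{i,-}$ (in particular both rates zero), where $\ell_i=0$ and $\Delta_i/\ell_i$ is nominally $0/0$. I would dispose of this by the standard continuity convention, interpreting $\Delta_i/\ell_i$ as its limiting value, namely the common rate when $R_{i,+}=R_{i,-}>0$ and $0$ when both vanish; with this reading every term above is well defined, the nonnegativity and the logarithmic-mean inequality persist, and the computation goes through unchanged. Beyond this bookkeeping, the proof is a two-line calculation and I do not anticipate a genuine obstacle.
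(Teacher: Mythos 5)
Your overall strategy---split $|\Delta_i|$ as $\sqrt{\Delta_i/\ell_i}\,\sqrt{\Delta_i\ell_i}$, apply Cauchy--Schwarz, and bound the logarithmic mean by the arithmetic mean---is sound and is structurally the same as the paper's. The one genuine problem is your premise about what $\sigma_t$ is: the entropy production rate is \emph{not} the coarse-grained quantity $\sum_i(R_{i,+}-R_{i,-})\ln(R_{i,+}/R_{i,-})$. In the paper it is resolved at the level of individual jumps between eigenstates of $\varrho_t=\sum_np_n(t)\dyad{n_t}$, namely $\sigma_t=\sum_i\sum_{m,n}[r^{i,+}_{mn}p_n-r^{i,-}_{nm}p_m]\ln\frac{r^{i,+}_{mn}p_n}{r^{i,-}_{nm}p_m}$ with $r^{i,\pm}_{mn}=|\mel{m_t}{L_{i,\pm}}{n_t}|^2$. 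The dynamical activity does coarse-grain the way you assume, since it is linear in the rates: $a_t=\sum_i\sum_{m,n}(r^{i,+}_{mn}p_n+r^{i,-}_{nm}p_m)=\sum_i(R_{i,+}+R_{i,-})$. But the logarithm obstructs the analogous identity for $\sigma_t$, so as written your computation bounds the left-hand side by $\sqrt{\tilde\sigma_ta_t/2}$, where $\tilde\sigma_t$ is your coarse-grained surrogate, not by $\sqrt{\sigma_ta_t/2}$.

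The gap is repairable in one line. The log-sum inequality (equivalently, joint convexity plus degree-one homogeneity of $(x,y)\mapsto(x-y)\ln(x/y)$) gives $\qty(\sum_ka_k-\sum_kb_k)\ln\frac{\sum_ka_k}{\sum_kb_k}\le\sum_k(a_k-b_k)\ln\frac{a_k}{b_k}$ for nonnegative sequences, hence $\tilde\sigma_t\le\sigma_t$, and your bound then implies the proposition (indeed a slightly stronger statement). Alternatively, you can avoid coarse-graining entirely and run your identical two-step argument over the triples $(i,m,n)$, after first using the triangle inequality $|R_{i,+}-R_{i,-}|\le\sum_{m,n}|r^{i,+}_{mn}p_n-r^{i,-}_{nm}p_m|$; that is precisely what the paper does, with the elementary inequality in the equivalent form $(x-y)^2\le[(x-y)\ln(x/y)](x+y)/2$. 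Your treatment of the $0/0$ channels is fine and carries over to either version.
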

\begin{proof}
Let $\varrho_t=\sum_np_n(t)\dyad{n_t}$ be the spectral decomposition of the density operator $\varrho_t$.
Define $r_{mn}^{i,\pm}(t)\coloneqq|\mel{m_t}{L_{i,\pm}}{n_t}|^2$, then the rates of entropy production and dynamical activity can be expressed as
\begin{align}
\sigma_t&=\sum_i\sum_{m,n}[r_{mn}^{i,+}(t)p_n(t)-r_{nm}^{i,-}(t)p_m(t)]\ln\frac{r_{mn}^{i,+}(t)p_n(t)}{r_{nm}^{i,-}(t)p_m(t)}\notag\\
&\eqqcolon\sum_i\sum_{m,n}\sigma_{mn}^{i}(t),\\
a_t&=\sum_{i}\sum_{m,n}[r_{mn}^{i,+}(t)p_n(t)+r_{nm}^{i,-}(t)p_m(t)]\notag\\
&\eqqcolon\sum_i\sum_{m,n}a_{mn}^{i}(t).
\end{align}
Note that $x\Phi(x/y)^{-1}$ is a is a concave function over $(0,+\infty)\times(0,+\infty)$. Applying Jensen’s inequality yields
\begin{align}
&\sum_i\qty|\tr{L_{i,+}\varrho_tL_{i,+}^\dagger}-\tr{L_{i,-}\varrho_tL_{i,-}^\dagger}|\notag\\
&=\sum_i\qty|\sum_{m,n}[r_{mn}^{i,+}(t)p_n(t)-r_{nm}^{i,-}(t)p_m(t)] |\notag\\
&\le \sum_{i}\sum_{m,n}\qty|r_{mn}^{i,+}(t)p_n(t)-r_{nm}^{i,-}(t)p_m(t)|\notag\\
&=\sum_{i}\sum_{m,n}\frac{\sigma_{mn}^i(t)}{2}\Phi\qty(\frac{\sigma_{mn}^i(t)}{2a_{mn}^i(t)})^{-1}\notag\\
&\le \frac{\sigma_t}{2}\Phi\qty(\frac{\sigma_t}{2a_t})^{-1},
\end{align}
which completes the proof.
\end{proof}

\subsection{Numerical demonstration}
Here we numerically demonstrate the speed limits for bosonic transport in a one-dimensional Bose--Hubbard model with the size of $N=15$ [see Fig.~\ref{fig:BosonChain}(a) for illustration].
Bosons can hop from site $i$ to site $j$ if and only if $|i-j|=1$.
The Hamiltonian is given by
\begin{equation}
H_t=-\gamma\sum_{i=1}^{N-1}(b_i^\dagger b_{i+1}+b_{i+1}^\dagger b_i)+\sum_{i=1}^{N}U_i(t)\frac{\hat n_i(\hat n_i-1)}{2}.
\end{equation}
For this isolated bosonic system, the flows are determined as follows:
\begin{align}
	f_i(t)&=0,\\
	f_{ij}(t)&=\delta_{|i-j|,1}2\gamma\Im\qty[\tr{b_j^\dagger b_i\varrho_t}],
\end{align}
and the maximal vertex degree is $d_G=2$.
Here $\delta_{x,y}$ denotes the Kronecker delta.
The velocity $\upsilon_t$ is thus given by
\begin{equation}
\upsilon_t=2\gamma\sum_{i=1}^{N-1}\qty|\Im\qty[\tr{b_{i+1}^\dagger b_i\varrho_t}]|.
\end{equation}
Consequently, the derived speed limit reads
\begin{align}
\tau&\ge\frac{\mca{W}_1({\vb*{x}}_0,{\vb*{x}}_\tau)}{\ev{\upsilon_t}_\tau}\eqqcolon\tau_1.
\end{align}
For comparison, we also examine two other non-topological speed limits. The first one is the generalization of the Mandelstam--Tamm bound \cite{Uhlmann.1992.PLA}, which reads
\begin{equation}
\tau\ge\frac{\mca{L}(\varrho_0,\varrho_\tau)}{\ev{\Delta_{{\varrho}_t} H_t}_\tau}\eqqcolon\tau_{\rm MT},
\end{equation}
where $\mca{L}(\varrho,\sigma)\coloneqq\arccos\tr{\qty|{\varrho^{1/2}\sigma^{1/2}}|}$ is the Bures angle.
The second one is a speed limit using the total variation distance $\mca{T}({\vb*{x}},{\vb*{y}})\coloneqq\|\vb*{x}-\vb*{y}\|_1/2$, which does not consider the topological nature and is given by
\begin{equation}
\tau\ge\frac{\mca{T}({\vb*{x}}_0,{\vb*{x}}_\tau)}{\ev{\upsilon_t}_\tau}\eqqcolon\tau_{\rm TV}.
\end{equation}
Notice that $\tau_1\ge\tau_{\rm TV}$ since $\mca{W}_1({\vb*{x}}_0,{\vb*{x}}_\tau)\ge\mca{T}({\vb*{x}}_0,{\vb*{x}}_\tau)$.

We set the initial state to the ground state of the Hamiltonian $H_0$, which can be calculated using the algorithm of density matrix renormalization group \cite{Fishman.2022.SPPC}. Initially, bosons are almost uniformly distributed.
The time-dependent Hamiltonian is modulated as $H_t=(1-t/\tau)H_0+(t/\tau)H_\tau$, where $H_\tau$ is the final Hamiltonian.
More specifically, the interaction coefficients $U_i(t)~(1\le i\le N)$ are given by
\begin{equation}
U_i(t)=\qty(1-\frac{t}{\tau})U_0+\frac{t}{\tau}U_{\tau,i},
\end{equation}
where $\{U_{\tau,i}\}$ are the interaction coefficients of the final Hamiltonian $H_\tau$.
By this protocol, bosons tend to gather at the central site after a period of time $\tau$ [see Fig.~\ref{fig:BosonChain}(b)].
We employ the time dependent variational principle algorithm \cite{Yang.2020.PRB} to simulate the time evolution of the bosonic system.
For each time $t\le\tau\,(=10)$, we calculate all relevant quantities and plot the bounds in Fig.~\ref{fig:BosonChain}(c).
As shown, the topological bound $t\ge t_1$ is tight and can be saturated.
In contrast, the non-topological bound $t_{\rm TV}$ cannot accurately capture the operational time of bosonic transport. This is because the total variation distance does not consider topological information as the Wasserstein distance does. Although the Mandelstam--Tamm bound $t\ge t_{\rm MT}$ yields good prediction for $t\le 6$, it subsequently becomes trivial as $t$ increases.

\section{Isolated quantum systems}
Here we derive a topological speed limit for isolated systems whose dynamics is described by the von Neumann equation:
\begin{equation}
\dot\varrho_t=-i[H_t,\varrho_t],
\end{equation}
where $H_t$ is a time-dependent Hamiltonian.
We consider a complete orthogonal set of projection operators $\{P_n\}_n$, i.e., $P_mP_n=\delta_{mn}P_n$ and $\sum_nP_n=\mbb{1}$.
Since the dynamics is invariant under transformation $H_t\to H_t+\alpha_t\mbb{1}$, we can assume that $\tr{H_t\tilde{\varrho}_t}=0$, where $\tilde{\varrho}_t\coloneqq\sum_nP_n\varrho_tP_n$ is a density matrix projected on the space of $\{P_n\}_n$.
Let $G(\mca{V},\mca{E})$ be the graph, in which $\mca{V}=\{n\}_n$ and $\mca{E}=\{\ev{m,n}\,|\,P_mH_tP_n\neq 0,\,m<n\}$, and $d_G$ be the maximum vertex degree of the graph.
Considering a vector of projective measurements, $x_n(t)=\tr{P_n\varrho_t}$, we obtain the time evolution of $x_n(t)$ in the following equation:
\begin{align}
\dot x_n(t)&=-i\sum_{m(\neq n)}\qty(\tr{P_nH_tP_m\varrho_t}-\tr{P_mH_tP_n\varrho_t})\notag\\
&=\sum_{m\in\mca{B}_n}f_{nm}(t),
\end{align}
where $f_{mn}(t)=-f_{nm}(t)$ is given by
\begin{equation}
f_{nm}(t)=-i\qty(\tr{P_nH_tP_m\varrho_t}-\tr{P_mH_tP_n\varrho_t}).
\end{equation}
Applying the speed limit in Eq.~(\ConSL), we obtain the following bound on the operational time:
\begin{equation}\label{eq:iso.qsl1}
\tau\ge\frac{\mca{W}_1(\vb*{x}_0,\vb*{x}_\tau)}{\ev{\upsilon_{t}}_\tau},
\end{equation}
where $\upsilon_{t}$ can be explicitly expressed as
\begin{equation}
\upsilon_{t}=\sum_{\ev{m,n}\in\mca{E}}\qty|\tr{(P_nH_tP_m-P_mH_tP_n)\varrho_t}|.
\end{equation}

By further upper bounding $\upsilon_{t}$, we can also obtain another speed limit that resembles the Mandelstam--Tamm bound.
By applying the Cauchy--Schwarz inequality, we have
\begin{align}
|\tr{P_nH_tP_m\varrho_t}|&\le\tr{P_mH_tP_nH_tP_m\varrho_t}^{1/2}\tr{P_n\varrho_t}^{1/2},\\
|\tr{P_mH_tP_n\varrho_t}|&\le\tr{P_nH_tP_mH_tP_n\varrho_t}^{1/2}\tr{P_m\varrho_t}^{1/2}.
\end{align}
Combining the above inequalities and applying the Cauchy--Schwarz inequality yield
\begin{align}
|f_{nm}(t)|&\le\qty[\tr{(P_mH_tP_nH_tP_m+P_nH_tP_mH_tP_n)\varrho_t}]^{1/2}\notag\\
&\times\qty[\tr{(P_n+P_m)\varrho_t}]^{1/2}.
\end{align}
Therefore, using the fact $\sum_nP_n=\mbb{1}$, we have
\begin{align}
\upsilon_{t}&\le \sum_{\ev{m,n}\in\mca{E}}\qty[\tr{(P_mH_tP_nH_tP_m+P_nH_tP_mH_tP_n)\varrho_t}]^{1/2}\notag\\
&\times\qty[\tr{(P_n+P_m)\varrho_t}]^{1/2}\notag\\
&\le \qty[\sum_{\ev{m,n}\in\mca{E}}\tr{(P_mH_tP_nH_tP_m+P_nH_tP_mH_tP_n)\varrho_t}]^{1/2}\notag\\
&\times\qty[\sum_{\ev{m,n}\in\mca{E}}\tr{(P_n+P_m)\varrho_t}]^{1/2}\notag\\
&\le \qty[\sum_{n}\tr{H_t^2P_n\varrho_tP_n}]^{1/2}\sqrt{d_G}\notag\\
&=\sqrt{d_G}\Delta_{\tilde{\varrho}_t} H_t,
\end{align}
where ${\Delta_\varrho H_t}\coloneqq\sqrt{\tr{H_t^2\varrho}-(\tr{H_t\varrho})^2}$ is the energy fluctuation of the Hamiltonian $H_t$ with respect to $\varrho$.
Consequently, we obtain the following speed limit:
\begin{equation}\label{eq:iso.qsl2}
\tau\ge\frac{\mca{W}_1(\vb*{x}_0,\vb*{x}_\tau)}{\sqrt{d_G}\ev{\Delta_{\tilde{\varrho}_t} H_t}_\tau}.
\end{equation}
We note that this speed limit is similar but different from the generalization of the Mandelstam--Tamm bound \cite{Uhlmann.1992.PLA}, which reads
\begin{equation}\label{eq:MT.qsl}
\tau\ge\frac{\mca{L}(\varrho_0,\varrho_\tau)}{\ev{\Delta_{{\varrho}_t} H_t}_\tau},
\end{equation}
where $\mca{L}(\varrho,\sigma)\coloneqq\arccos\tr{\qty|{\varrho^{1/2}\sigma^{1/2}}|}$ is the Bures angle.
Notably, the new speed limit can be applied to quantum systems under measurement, making it more applicable than the generalized Mandelstam--Tamm bound.

We now consider a specific case where $P_n=\dyad{n}$. In this case, $|f_{mn}(t)|$ can be upper bounded as
\begin{align}
|f_{mn}(t)|&=|(\mel{n}{H_t}{m}\mel{m}{\varrho_t}{n}-\mel{m}{H_t}{n}\mel{n}{\varrho_t}{m})|\notag\\
&\le 2|\mel{m}{H_t}{n}||\mel{n}{\varrho_t}{m}|.
\end{align}
According to the Cauchy--Schwarz inequality, we have
\begin{equation}
|\mel{n}{\varrho_t}{m}|=|\mel{n}{\varrho_t^{1/2}\varrho_t^{1/2}}{m}|\le\sqrt{\mel{n}{\varrho_t}{n}\mel{m}{\varrho_t}{m}}.
\end{equation}
Therefore, 
\begin{align}
|f_{mn}(t)|\le 2|\mel{m}{H_t}{n}|\sqrt{\mel{n}{\varrho_t}{n}\mel{m}{\varrho_t}{m}}.
\end{align}
Consequently, the velocity can be upper bounded in terms of the Hamiltonian and diagonal terms of density matrix as
\begin{align}
\upsilon_t\le 2\sum_{\ev{m,n}\in\mca{E}}|h_{mn}(t)|\sqrt{p_n(t)p_m(t)}\eqqcolon \check{\upsilon}_t,
\end{align}
where we have defined $h_{mn}(t)\coloneqq \mel{m}{H_t}{n}$ and $p_n(t)\coloneqq\mel{n}{\varrho_t}{n}$ for simplicity.
Then, Eq.~\eqref{eq:iso.qsl1} yields the following speed limit:
\begin{align}
\tau\ge\frac{\mca{W}_1(\vb*{x}_0,\vb*{x}_\tau)}{\ev{\check{\upsilon}_t}_\tau},\label{eq:iso.qsl3}
\end{align}
which is easier to compute than the original bound \eqref{eq:iso.qsl1}.

\section{Measurement-induced quantum walk}
Here we demonstrate an application of the speed limits \eqref{eq:iso.qsl1}, \eqref{eq:iso.qsl2}, and \eqref{eq:iso.qsl3} for a quantum system under measurement.
We consider a model of the continuous-time quantum walk \cite{Didi.2022.PRE}, which is induced by measurements performed at discrete times.
The system Hamiltonian is given by
\begin{equation}
H=\sum_{n=1}^{N-1}\gamma_n(\dyad{n}{n+1}+\dyad{n+1}{n}),
\end{equation}
which describes hops between nearest neighbors on a finite line lattice.
The projective measurements with operators $\{\dyad{n}\}_n$ are performed at times $t_k\coloneqq k\Delta t$ for $k=0,\dots,K\,(=\tau/\Delta t)$. Between each measurement event, the system unitarily evolves according to the von Neumman equation.
The density matrix after the $k$th measurement is given by
\begin{equation}
\varrho_{t_k^+}=\sum_n\mel{n}{\varrho_{t_k}}{n}\dyad{n}.
\end{equation}
For this system, we consider the set of projection operators $\{P_n=\dyad{n}\}$. The graph $G$ thus has $N$ vertices and $N-1$ edges that connect $n$ and $n+1$ for all $1\le n\le N-1$. The maximum degree of the graph is $d_G=2$.
Applying Eqs.~\eqref{eq:iso.qsl1} and \eqref{eq:iso.qsl2} to the time evolution of the system between the $k$th and $(k+1)$th measurements yields the following result:
\begin{equation}\label{eq:mea.ind.tmp1}
\sqrt{2}\int_{t_k}^{t_{k+1}}\Delta_{\tilde{\varrho}_t}H\dd{t}\ge \int_{t_k}^{t_{k+1}}\upsilon_{t}\dd{t}\ge\mca{W}_1(\vb*{x}_{t_k},\vb*{x}_{t_{k+1}}).
\end{equation}
Taking the sum of both sides of Eq.~\eqref{eq:mea.ind.tmp1} for $k=0,\dots,K-1$ and applying the triangle inequality for $\mca{W}_1$, we obtain
\begin{equation}
\sqrt{2}\int_{0}^{\tau}\Delta_{\tilde{\varrho}_t}H\dd{t}\ge \int_{0}^{\tau}\upsilon_{t}\dd{t}\ge\mca{W}_1(\vb*{x}_{0},\vb*{x}_{\tau}).
\end{equation}
Consequently, the following speed limits are derived:
\begin{equation}
\tau\ge\frac{\mca{W}_1(\vb*{x}_0,\vb*{x}_\tau)}{\ev{\upsilon_{t}}_\tau}\ge\frac{\mca{W}_1(\vb*{x}_0,\vb*{x}_\tau)}{\sqrt{2}\ev{\Delta_{\tilde{\varrho}_t} H}_\tau}.
\end{equation}
We note that the system is measured at discrete times; therefore, the generalized Mandelstam--Tamm bound cannot be applied directly.

By applying the speed limit \eqref{eq:iso.qsl3}, we can also obtain another meaningful bound.
Note that $\mca{E}=\{\ev{n,n+1}\,|\,1\le n\le N-1\}$ and $h_{mn}(t)=\gamma_{\min\{m,n\}}\delta_{|m-n|,1}$ in this system.
The quantity $\check{\upsilon}_t$ can be upper bounded as follows:
\begin{align}
\check{\upsilon}_t&=2\sum_{n=1}^{N-1}\gamma_n\sqrt{p_n(t)p_{n+1}(t)}\notag\\
&\le \sum_{n=1}^{N-1}\gamma_n[p_n(t)+p_{n+1}(t)]\notag\\
&\le 2\max_{n}\gamma_n.
\end{align}
Therefore, the operational time is lower bounded as
\begin{equation}
\tau\ge\frac{\mca{W}_1(\vb*{x}_0,\vb*{x}_\tau)}{2\max_{n}\gamma_n}.
\end{equation}

\section{Quantum communication using spin systems}
Here we derive a topological speed limit for quantum communication through an arbitrary graph $G(\mca{V},\mca{E})$ of spins with ferromagnetic Heisenberg interactions \cite{Bose.2003.PRL}.
The vertices $\{n\in\mca{V}\}_n$ represent spins and the edges $\{\ev{n,m}\in\mca{E}\}$ connect interacting spins.
The Hamiltonian is given by
\begin{equation}
H_t=-\frac{\gamma}{2}\sum_{\ev{n,m}\in\mca{E}}\vec{\sigma}_n\cdot\vec{\sigma}_m + \sum_{n\in\mca{V}}B_n(t)\sigma_n^z,
\end{equation}
where $\gamma>0$ denotes the coupling strength, $\vec{\sigma}_n=[\sigma_n^x,\sigma_n^y,\sigma_n^z]^\top$ is the vector of Pauli spin operators for the $n$th spin, and $B_n(t)$ is an external magnetic field in the $z$ direction.

We assume that the sender, Alice, has a quantum state encoded in spin $1$ and wants to relay it to the receiver, Bob, who can access and read out spin $N$. By manipulating the external magnetic field (which Alice can control), quantum information can be transmitted through the graph of spins. After a predetermined time when the state of spin $1$ is transferred to spin $N$, Bob reads out the state of this site. The minimum time required for high-fidelity information transmission is the quantity of interest.

Consider state $\vb*{x}_t$ with $x_n(t)=(\tr{\sigma_n^z\varrho_t}+1)/2\ge 0$. Notice that $x_n(t)=0$ ($x_n(t)=1$) corresponds to spin down (up) with respect to the $z$ direction.
Taking the time derivative of $x_i(t)$, we can calculate as follows:
\begin{align}
\dot{x}_n(t)&=\tr{\sigma_n^z\dot{\varrho}_t}/2\notag\\
&=(i/2)\tr{[\sigma_n^z,H_t]\varrho_t}\notag\\
&=-(i\gamma/4)\sum_{m\in\mca{B}_n}\tr{[\sigma_n^z,\vec{\sigma}_n\cdot\vec{\sigma}_m]\varrho_t}\notag\\
&=(\gamma/2)\sum_{m\in\mca{B}_n}\tr{(\sigma_n^y\sigma_m^x-\sigma_n^x\sigma_m^y)\varrho_t}\notag\\
&=\sum_{m\in\mca{B}_n}f_{nm}(t),
\end{align}
where $f_{nm}(t)=-f_{mn}(t)$ is given by
\begin{equation}
f_{nm}(t)=(\gamma/2)\tr{(\sigma_n^y\sigma_m^x-\sigma_n^x\sigma_m^y)\varrho_t}.
\end{equation}
Therefore, the velocity can be expressed as
\begin{equation}
\upsilon_t=\frac{\gamma}{2}\sum_{\ev{m,n}\in\mca{E}}\qty|\tr{(\sigma_n^y\sigma_m^x-\sigma_n^x\sigma_m^y)\varrho_t}|.
\end{equation}
Consequently, the derived topological speed limit reads
\begin{equation}\label{eq:spin.topo.sl}
\tau\ge\frac{\mca{W}_1(\vb*{x}_0,\vb*{x}_\tau)}{\ev{\upsilon_t}_\tau}.
\end{equation}

It can be easily verified that
\begin{align}
\sigma_n^y\sigma_m^x-\sigma_n^x\sigma_m^y&\preceq(\sigma_n^z+\mbb{1}_2)\otimes\mbb{1}_2+\mbb{1}_2\otimes(\sigma_m^z+\mbb{1}_2),\\
\sigma_n^x\sigma_m^y-\sigma_n^y\sigma_m^x&\preceq(\sigma_n^z+\mbb{1}_2)\otimes\mbb{1}_2+\mbb{1}_2\otimes(\sigma_m^z+\mbb{1}_2),
\end{align}
where $\mbb{1}_2$ is the two-dimensional identity matrix.
Here, $A\preceq B$ means that $B-A$ is positive semi-definite.
Therefore, $|f_{nm}(t)|$ can be upper bounded as
\begin{equation}\label{eq:fmn.ub}
|f_{nm}(t)|\le\gamma[x_n(t)+x_m(t)].
\end{equation}
Using Eq.~\eqref{eq:fmn.ub}, we can show that the velocity $\upsilon_t$ is bounded from above as
\begin{equation}\label{eq:spin.vt.ub}
\upsilon_t\le\sum_{\ev{n,m}\in\mca{E}}\gamma[x_n(t)+x_m(t)]\le\gamma d_G\|\vb*{x}_t\|_1.
\end{equation}
It should be noted that the total spin $\|\vb*{x}_t\|_1$ is invariant for all times.
For convenience, we define $\mca{M}\coloneqq\|\vb*{x}_t\|_1$.
Using the inequality \eqref{eq:spin.vt.ub} and the topological speed limit \eqref{eq:spin.topo.sl}, we obtain the following bound on the operational time required for transmitting information:
\begin{equation}\label{eq:spin.qsl}
\tau\ge\frac{\mca{W}_1(\vb*{x}_0,\vb*{x}_\tau)}{\gamma d_G\mca{M}}.
\end{equation}
This inequality implies that the speed of information transmission is constrained by the topology of the graph $G$ and the coupling strength.

Now let us discuss a particular case where the graph $G$ is a spin chain of length $N\ge 2$, and the spins $n$ and $n+1$ interact with each other for all $1\le n\le N-1$ \cite{Bose.2003.PRL,Murphy.2010.PRA}. 
Alice prepares the spin chain in the initial state $\varrho_0=\dyad{\varphi_0}$, with spin $1$ in the excited state $\ket{1}$, and all other spins in the ground state $\ket{0}$. Specifically, $\ket{\varphi_0}$ is given by
\begin{equation}
\ket{\varphi_0}=\ket{1}\otimes\ket{0}\otimes\dots\otimes\ket{0}.
\end{equation}
In this setup, the maximum degree of the graph is $d_G=2$, and the total spin is $\mca{M}=1$.
The initial and target vectors are given by $\vb*{x}_0=[1,0,\dots,0]^\top$ and $\vb*{x}_\tau=[0,\dots,0,1]^\top$, respectively.
The Wasserstein distance can also be analytically calculated as $\mca{W}_1(\vb*{x}_0,\vb*{x}_\tau)=N-1$.
Then, according to Eq.~\eqref{eq:spin.qsl}, the minimum time required for transmitting the quantum state is lower bound as
\begin{equation}\label{eq:simp.spin.qsl}
\tau\ge\frac{N-1}{2\gamma}.
\end{equation}
Intriguingly, Eq.~\eqref{eq:simp.spin.qsl} implies that it takes at least a time proportional to the distance between spins to reliably transfer a quantum state.
The longer the distance, the more time is required.
This is in agreement with the numerical result in Ref.~\cite{Murphy.2010.PRA}, wherein an optimal control of the external magnetic field was used.
It is noteworthy that this implication cannot be obtained from the conventional speed limits such as the Mandelstam--Tamm and Margolus--Levitin bounds.

\section{Markovian open quantum systems}
Here we derive a topological speed limit for Markovian open quantum systems.
We consider a finite-dimensional quantum system that is weakly coupled to thermal reservoirs.
The dynamics of the system's reduced density matrix is governed by the local Lindblad equation:
\begin{align}
\dot\varrho_t=-i[H+V_t,\varrho_t]+\sum_k\mca{D}[L_k]\varrho_t,
\end{align}
where $H=\sum_n\epsilon_n\dyad{\epsilon_n}$ is the system Hamiltonian with $\epsilon_n\neq\epsilon_m$ for $n\neq m$, $V_t$ is an external driving field, and $\{L_k\}$ are jump operators that characterize jumps between energy eigenstates with the same energy change $\omega_k$ (i.e., $[L_k,H]=\omega_kL_k$).
We assume that the Hamiltonian has no energy degeneracy.
Each jump operator $L_k$ has a counterpart $L_{k'}$, which corresponds to the reversed jump and satisfies the local detailed balance condition $L_k=e^{s_k/2}L_{k'}^\dagger$.
Here, $ s_k=- s_{k'}$ denotes the change in environmental entropy due to the jump $L_k$. 
We consider the time evolution of the energetic population $x_n(t)=\mel{\epsilon_n}{\varrho_t}{\epsilon_n}$, which can be described by the following equation:
\begin{align}\label{eq:mse.like}
\dot x_n(t)&=-i\mel{\epsilon_n}{[V_t,\varrho_t]}{\epsilon_n}+\sum_k\sum_{m(\neq n)}[r_{nm}^kx_m(t)-r_{mn}^{k'}x_n(t)],
\end{align}
where $r_{mn}^k\coloneqq|\mel{\epsilon_m}{L_k}{\epsilon_n}|^2\ge 0$ is the transition rate satisfying the local detailed balance $r_{mn}^k=e^{ s_k}r_{nm}^{k'}$.
Notice that Eq.~\eqref{eq:mse.like} can be expressed in the form of Eq.~(\DetEq) with
\begin{align}
f_n(t)&=-i\mel{\epsilon_n}{[V_t,\varrho_t]}{\epsilon_n},\\
f_{nm}(t)&=\sum_k[r_{nm}^kx_m(t)-r_{mn}^{k'}x_n(t)].
\end{align}
The graph $G(\mca{V},\mca{E})$ is thus defined by $\mca{V}=\{1,\dots,n,\dots\}$ and $\mca{E}=\{\ev{m,n}\,|\,m<n,\,\exists k~\text{s.t.}~r_{mn}^k\neq 0\}$.
According to Eq.~(\GenSL), we obtain the following speed limit:
\begin{equation}
\tau\ge\frac{\mca{W}_{1,\lambda}(\vb*{x}_0,\vb*{x}_\tau)}{\ev{\upsilon_{t,\lambda}}_\tau}.
\end{equation}

Next, we derive an upper bound for $\upsilon_{t,\lambda}$. First, according to Prop.~\ref{prop:diag.trace.bound}, we have
\begin{equation}\label{eq:Markov.open.tmp1}
\sum_n|f_n(t)|\le\|[V_t,\varrho_t]\|_1\le 2\Delta_{\varrho_t} V_t,
\end{equation}
where $\|X\|_p\coloneqq\tr{|X|^p}^{1/p}$ is the Schatten $p$-norm of operator $X$.
Furthermore, according to Prop.~\ref{prop:cur.ent.dyn}, we can prove that
\begin{equation}\label{eq:Markov.open.tmp2}
\sum_{\ev{m,n}\in\mca{E}}|f_{mn}(t)|\le\frac{\sigma_t}{2}\Phi\qty(\frac{\sigma_t}{2a_t})^{-1}.
\end{equation}
Here, $a_t\coloneqq\sum_k\tr{L_k\varrho_tL_k^\dagger}$ is the dynamical activity rate, $\sigma_t\coloneqq\sigma_t^{\rm pop}+\sigma_t^{\rm env}$ is the total entropy production rate, $\sigma_t^{\rm env}\coloneqq\sum_k\tr{L_k\varrho_tL_k^\dagger} s_k$ is the environmental entropy rate and $\sigma_t^{\rm pop}\coloneqq-\sum_n\dot{x}_n(t)\ln x_n(t)-i\sum_n\mel{\epsilon_n}{[V_t,\varrho_t]}{\epsilon_n}\ln x_n(t)$ is the sum of the Shannon entropy rate of the population distribution and the entropic change contributed by the external Hamiltonian $V_t$.
Combining Eqs.~\eqref{eq:Markov.open.tmp1} and \eqref{eq:Markov.open.tmp2} yields
\begin{equation}
\upsilon_{t,\lambda}\le 2\lambda\Delta_{\varrho_t} V_t + \frac{\sigma_t}{2}\Phi\qty(\frac{\sigma_t}{2a_t})^{-1}.
\end{equation}
Consequently, we obtain the following thermodynamic speed limit:
\begin{equation}\label{eq:Markov.open.qsl1}
\tau\ge\frac{\mca{W}_{1,\lambda}(\vb*{x}_0,\vb*{x}_\tau)}{\ev{2\lambda\Delta_{\varrho_t} V_t + \sigma_t\Phi(\sigma_t/2a_t)^{-1}/2}_\tau}.
\end{equation}

\begin{proposition}\label{prop:diag.trace.bound}
The following inequality holds for arbitrary Hermitian matrix $V$ and density matrix $\varrho$:
\begin{equation}\label{eq:diag.ine}
\sum_n|\mel{\epsilon_n}{[V,\varrho]}{\epsilon_n}|\le\|[V,\varrho]\|_1\le 2\Delta_{\varrho} V.
\end{equation}
\end{proposition}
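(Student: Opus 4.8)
The plan is to treat the two inequalities in \eqref{eq:diag.ine} separately. The left inequality, $\sum_n|\mel{\epsilon_n}{[V,\varrho]}{\epsilon_n}|\le\|[V,\varrho]\|_1$, is the generic bound ``sum of absolute values of diagonal entries $\le$ trace norm,'' which I would obtain from Schatten-norm duality: choose unimodular phases $\omega_n$ with $\omega_n\mel{\epsilon_n}{[V,\varrho]}{\epsilon_n}=|\mel{\epsilon_n}{[V,\varrho]}{\epsilon_n}|$ and form the unitary $U=\sum_n\omega_n\dyad{\epsilon_n}$; then $\sum_n|\mel{\epsilon_n}{[V,\varrho]}{\epsilon_n}|=\tr{U[V,\varrho]}\le\|U\|_\infty\|[V,\varrho]\|_1=\|[V,\varrho]\|_1$ by Hölder's inequality for Schatten norms ($\|U\|_\infty=1$). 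This step is routine.

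For the right inequality, $\|[V,\varrho]\|_1\le 2\Delta_{\varrho}V$, the naive estimate $\|V\varrho\|_1+\|\varrho V\|_1\le 2\|V\|_\infty$ is far too weak, so the idea is to exploit the gauge freedom $V\mapsto V-\tr{V\varrho}\mbb{1}$ — which leaves both $[V,\varrho]$ and $\Delta_\varrho V$ unchanged — together with the symmetric splitting $\varrho=\varrho^{1/2}\varrho^{1/2}$. Setting $W\coloneqq V-\tr{V\varrho}\mbb{1}$, I would write $[V,\varrho]=W\varrho-\varrho W=(W\varrho^{1/2})\varrho^{1/2}-\varrho^{1/2}(W\varrho^{1/2})^\dagger$, using that $W$ and $\varrho^{1/2}$ are Hermitian. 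The triangle inequality and the Hölder bound $\|AB\|_1\le\|A\|_2\|B\|_2$ then give $\|[V,\varrho]\|_1\le 2\|W\varrho^{1/2}\|_2\|\varrho^{1/2}\|_2$. It remains to evaluate the two Hilbert--Schmidt norms: $\|\varrho^{1/2}\|_2^2=\tr{\varrho}=1$ and, by cyclicity, $\|W\varrho^{1/2}\|_2^2=\tr{\varrho^{1/2}W^2\varrho^{1/2}}=\tr{W^2\varrho}=\tr{V^2\varrho}-(\tr{V\varrho})^2=(\Delta_\varrho V)^2$. Combining these yields $\|[V,\varrho]\|_1\le 2\Delta_\varrho V$.

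I expect the obstacle here to be conceptual rather than computational: the key realization is that recentering $V$ about its mean $\tr{V\varrho}$ and distributing $\varrho^{1/2}$ on the two sides is exactly what upgrades the trivial operator-norm bound to one controlled by the variance $\Delta_\varrho V$; after that, the remainder is a single use of Cauchy--Schwarz. The only routine points needing a moment's care are verifying that the additive shift in $W$ drops out of the commutator (it does, since $[\mbb{1},\varrho]=0$) and the identity $\|W\varrho^{1/2}\|_2^2=\tr{W^2\varrho}$, which is just Hermiticity of $W$ plus cyclicity of the trace.
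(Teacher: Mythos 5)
Your proof is correct. The first inequality is handled essentially as in the paper: the paper also dualizes against a diagonal contraction $U$ (it uses real signs $\pm 1$ after observing that the diagonal elements of $[V,\varrho]$ are purely imaginary, whereas you use general unimodular phases, which works just as well) and invokes H\"older/von~Neumann. For the second inequality, however, you take a genuinely different and more elementary route. The paper purifies $\varrho$ to $\ket{\varrho}\in\mca{H}\otimes\mca{H}'$, uses contractivity of the trace norm under the partial trace to reduce to $\|[\tilde V,\dyad{\varrho}]\|_1$, and then evaluates the trace norm of a rank-two commutator with a pure-state projector explicitly, obtaining $2\Delta_{\dyad{\varrho}}\tilde V=2\Delta_\varrho V$. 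You instead recenter $V\mapsto W=V-\tr{V\varrho}\mbb{1}$ (which changes neither $[V,\varrho]$ nor $\Delta_\varrho V$), split $\varrho=\varrho^{1/2}\varrho^{1/2}$, and apply the triangle inequality together with $\|AB\|_1\le\|A\|_2\|B\|_2$ to get $\|[V,\varrho]\|_1\le 2\|W\varrho^{1/2}\|_2\|\varrho^{1/2}\|_2=2\Delta_\varrho V$; all the identities you rely on ($\varrho W=\varrho^{1/2}(W\varrho^{1/2})^\dagger$, $\|W\varrho^{1/2}\|_2^2=\tr{W^2\varrho}=(\Delta_\varrho V)^2$, $\|\varrho^{1/2}\|_2^2=1$) check out. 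Your argument buys self-containedness: it needs no ancilla, no CPTP contractivity, and no spectral analysis of the purified commutator, just Schatten--H\"older. The paper's purification argument, on the other hand, makes transparent why the bound is saturated for pure states (where the partial-trace step is lossless), a fact that is slightly less visible in your derivation. Either proof is acceptable.
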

\begin{proof}
Note that $\mel{\epsilon_n}{[V,\varrho]}{\epsilon_n}$ is pure imaginary.
Let $U=\sum_{n}u_{nn}{\epsilon_n}$ be a diagonal matrix with elements $u_{nn}=\sign({-i\mel{\epsilon_n}{[V,\varrho]}{\epsilon_n}})$.
Here, $\sign(x)=1$ if $x\ge 0$ and $\sign(x)=-1$ if $x<0$.
It is evident that all singular values of matrix $U$ equal $1$; thus, $\|U\|_\infty=1$.
According to von Neumann's trace inequality, the first inequality in Eq.~\eqref{eq:diag.ine} can be proved as
\begin{align}
\sum_n|\mel{\epsilon_n}{[V,\varrho]}{\epsilon_n}|&=|\tr{U[V,\varrho]}|\notag\\
&\le\|U\|_\infty\|[V,\varrho]\|_1\notag\\
&=\|[V_t,\varrho_t]\|_1.
\end{align}

Next, we need only prove the second inequality in Eq.~\eqref{eq:diag.ine}.
To this end, we follow the idea in Ref.~\cite{Funo.2019.NJP}.
Let $\varrho=\sum_ip_i\dyad{i}$ be the spectral decomposition of the density matrix $\varrho$ acting on the $d$-dimensional Hilbert space $\mca{H}$.
Let $\mca{H}'$ be another copy of the Hilbert space $\mca{H}$ with an orthonormal basis $\{\ket{i'}\}$.
Then, $\ket{\varrho}=\sum_i\sqrt{p_i}\ket{i}\otimes\ket{i'}\in\mca{H}\otimes\mca{H}'$ is the purification of the density matrix $\varrho_t$.
We also define $\tilde{V}\coloneqq V\otimes\mbb{1}\in\mca{H}\otimes\mca{H}'$, where $\mbb{1}$ is the identity matrix acting on the Hilbert space $\mca{H}'$.
Let $\Phi(\cdot)=\tr_{\mca{H}'}\qty{\cdot}$ denote the map of the partial trace with respect to $\mca{H}'$.
By simple algebraic calculations, we can verify that
\begin{equation}
\Phi([\tilde{V},\dyad{\varrho}])=[V,\varrho].
\end{equation}
Since the trace norm is contractive under a completely positive and trace-preserving map, we have
\begin{align}
\|[V,\varrho]\|_1&=\|\Phi([\tilde{V},\dyad{\varrho}])\|_1\notag\\
&\le\|[\tilde{V},\dyad{\varrho}]\|_1\notag\\
&=\Delta_{\dyad{\varrho}}\tilde{V}\tr{\sqrt{\dyad{\varrho}+\dyad{\varrho_\perp}}}\notag\\
&=2\Delta_{\dyad{\varrho}}\tilde{V}.\label{eq:ene.fluct.tmp1}
\end{align}
Here, $\ket{\varrho_\perp}$ is a state orthogonal to $\ket{\varrho}$, given by
\begin{equation}
\ket{\varrho_\perp}\coloneqq\frac{(V-\mel{\varrho}{V}{\varrho})\ket{\varrho}}{\Delta_{\dyad{\varrho}}\tilde{V}}.
\end{equation}
In addition, the energy fluctuation $\Delta_{\dyad{\varrho}}\tilde{V}$ can be simplified as follows:
\begin{align}
\Delta_{\dyad{\varrho}}\tilde{V}&=\qty(\mel{\varrho}{\tilde{V}^2}{\varrho}-\mel{\varrho}{\tilde{V}}{\varrho}^2)^{1/2}\notag\\
&=\qty(\tr{V^2\varrho}-\tr{V\varrho}^2)^{1/2}\notag\\
&=\Delta_\varrho V.\label{eq:ene.fluct.tmp2}
\end{align}
Combining Eqs.~\eqref{eq:ene.fluct.tmp1} and \eqref{eq:ene.fluct.tmp2} yields the desired second inequality in Eq.~\eqref{eq:diag.ine}.
\end{proof}

\begin{proposition}\label{prop:cur.ent.dyn}
The following inequality holds:
\begin{equation}\label{eq:prop.cur.ent.dyn.tmp1}
\sum_{\ev{m,n}\in\mca{E}}\qty|\sum_k[r_{nm}^kx_m(t)-r_{mn}^{k'}x_n(t)]|\le\frac{\sigma_t}{2}\Phi\qty(\frac{\sigma_t}{2a_t})^{-1},
\end{equation}
where $\Phi(x)$ is the inverse function of $x\tanh(x)$.
\end{proposition}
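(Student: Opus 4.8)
The plan is to follow the same three-step template as the proof of Proposition~\ref{prop:current.bound}. \emph{Step 1: pointwise estimate and Cauchy--Schwarz.} Starting from $f_{nm}(t)=\sum_k[r_{nm}^kx_m(t)-r_{mn}^{k'}x_n(t)]$, use the triangle inequality to move $\sum_k$ outside the absolute value, then apply $(x-y)^2\le[(x-y)\ln(x/y)](x+y)/2$ for $x,y\ge0$ (the same logarithmic-mean bound used in Prop.~\ref{prop:current.bound}) to each summand with $x=r_{nm}^kx_m(t)$ and $y=r_{mn}^{k'}x_n(t)$. Cauchy--Schwarz over the double sum in $(m,n)\in\mca{E}$ and $k$ then gives
\[
\sum_{(m,n)\in\mca{E}}|f_{mn}(t)|\le\sqrt{S_{\rm ent}}\;\sqrt{S_{\rm act}/2},
\]
with $S_{\rm ent}\coloneqq\sum_{(m,n)\in\mca{E}}\sum_k(r_{nm}^kx_m-r_{mn}^{k'}x_n)\ln\frac{r_{nm}^kx_m}{r_{mn}^{k'}x_n}$ and $S_{\rm act}\coloneqq\sum_{(m,n)\in\mca{E}}\sum_k(r_{nm}^kx_m+r_{mn}^{k'}x_n)$. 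It then suffices to show $S_{\rm act}\le a_t$ and $S_{\rm ent}=\sigma_t^{\rm pop}+\sigma_t^{\rm env}$.

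\emph{Step 2: the activity factor.} Since $H$ is non-degenerate and $[L_k,H]=\omega_kL_k$, each $L_k$ maps every energy eigenstate onto a multiple of a single eigenstate, so $\tr{L_k\varrho_tL_k^\dagger}=\sum_{m,n}r_{nm}^kx_m(t)$ with no cross terms. The summand $\sum_k(r_{nm}^kx_m+r_{mn}^{k'}x_n)$ is symmetric in $m\leftrightarrow n$ (because $k\mapsto k'$ is an involution of the channel set) and vanishes unless $\{m,n\}\in\mca{E}$, so exactly as in the computation below Eq.~\eqref{eq:Markov.open.qsl1} one gets $S_{\rm act}\le\tfrac12\sum_k\sum_{m,n}(r_{nm}^kx_m+r_{mn}^{k'}x_n)=\sum_k\tr{L_k\varrho_tL_k^\dagger}=a_t$.

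\emph{Step 3: the entropic factor.} Use local detailed balance $r_{nm}^k=e^{s_k}r_{mn}^{k'}$ to split $\ln\frac{r_{nm}^kx_m}{r_{mn}^{k'}x_n}=s_k+\ln\frac{x_m}{x_n}$, so $S_{\rm ent}$ decomposes into an $s_k$-part and a $\ln(x_m/x_n)$-part. The summand $\sum_k(r_{nm}^kx_m-r_{mn}^{k'}x_n)s_k$ of the $s_k$-part is also symmetric in $m\leftrightarrow n$ (now using $s_{k'}=-s_k$) and supported on $\mca{E}$, so summing over edges gives $\tfrac12\sum_k s_k\sum_{m,n}(r_{nm}^kx_m-r_{mn}^{k'}x_n)=\sum_k s_k\tr{L_k\varrho_tL_k^\dagger}=\sigma_t^{\rm env}$. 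For the $\ln(x_m/x_n)$-part, note $\sum_k(r_{nm}^kx_m-r_{mn}^{k'}x_n)=f_{nm}(t)$, so it equals $\sum_{(m,n)\in\mca{E}}f_{nm}(t)\ln\frac{x_m}{x_n}$; since $f_{nm}=-f_{mn}$ this edge sum collapses to the vertex sum $-\sum_n\ln x_n(t)\sum_{m(\neq n)}f_{nm}(t)$. Finally, substituting $\dot x_n(t)=-i\mel{\epsilon_n}{[V_t,\varrho_t]}{\epsilon_n}+\sum_{m(\neq n)}f_{nm}(t)$ into the definition of $\sigma_t^{\rm pop}$ shows that the external-driving term there cancels the matching part of $-\sum_n\dot x_n(t)\ln x_n(t)$, leaving $\sigma_t^{\rm pop}=-\sum_n\ln x_n(t)\sum_{m(\neq n)}f_{nm}(t)$, which is precisely the $\ln(x_m/x_n)$-part. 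Hence $S_{\rm ent}=\sigma_t^{\rm pop}+\sigma_t^{\rm env}$, and substituting into Step~1 gives \eqref{eq:prop.cur.ent.dyn.tmp1}.

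The hard part will be the combinatorial bookkeeping in Steps~2 and~3: one must keep straight the sum over undirected edges $(m,n)\in\mca{E}$ (with the convention $m<n$), the sum over channels $k$ together with their reverses $k'$, and the sum over ordered eigenstate pairs, and verify that each symmetrization and reindexing is exact --- with no stray or double-counted terms, and noting that an energy-conserving ($\omega_k=0$) Lindblad operator only decreases $S_{\rm act}$ relative to $a_t$, which is harmless here. Non-degeneracy of $H$ is what makes this work, since it forces every $L_k$ to couple definite eigenstate pairs so that $\tr{L_k\varrho_tL_k^\dagger}=\sum_{m,n}r_{nm}^kx_m(t)$ holds without interference between distinct pairs.
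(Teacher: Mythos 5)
Your proposal is correct and follows essentially the same route as the paper's proof: the triangle inequality, the logarithmic-mean bound $(x-y)^2\le[(x-y)\ln(x/y)](x+y)/2$, and Cauchy--Schwarz, with the two resulting factors identified as $\sigma_t^{\rm pop}+\sigma_t^{\rm env}$ (via local detailed balance splitting the logarithm) and $a_t$; the paper just runs the entropic identification in the opposite direction, starting from the definitions of $\sigma_t^{\rm pop}$ and $\sigma_t^{\rm env}$ and combining them into the single logarithm, and works with $\tfrac12\sum_{m,n}$ over ordered pairs rather than your edge sums. The only minor imprecision is that your $S_{\rm ent}=\sigma_t^{\rm pop}+\sigma_t^{\rm env}$ should strictly be $S_{\rm ent}\le\sigma_t^{\rm pop}+\sigma_t^{\rm env}$ when $\omega_k=0$ channels contribute nonnegative diagonal ($m=n$) terms, the same harmless slack you already note for $S_{\rm act}$.
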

\begin{proof}
Note that $[L_k^\dagger L_k,H]=0$ and $\mel{\epsilon_n}{L_k^\dagger L_k}{\epsilon_m}=0$ for $m\neq n$. Therefore,
\begin{align}
\tr{L_k\varrho_tL_k^\dagger}&=\sum_m\mel{\epsilon_m}{\varrho_t}{\epsilon_m}\mel{\epsilon_m}{L_k^\dagger L_k}{\epsilon_m}\notag\\
&=\sum_mx_m(t)\sum_n\mel{\epsilon_m}{L_k^\dagger}{\epsilon_n}\mel{\epsilon_n}{L_k}{\epsilon_m}\notag\\
&=\sum_{m,n}r_{nm}^{k}x_m(t).
\end{align}
We first calculate $\sigma_t^{\rm env}$ as follows:
\begin{align}
\sigma_t^{\rm env}&=\sum_k\tr{L_k\varrho_tL_k^\dagger} s_k\notag\\
&=\frac{1}{2}\sum_k(\tr{L_k\varrho_tL_k^\dagger}-\tr{L_{k'}\varrho_tL_{k'}^\dagger}) s_k\notag\\
&=\frac{1}{2}\sum_k\sum_{m,n}[r_{nm}^{k}x_m(t)-r_{mn}^{k'}x_n(t)] s_k\notag\\
&=\frac{1}{2}\sum_k\sum_{m,n}[r_{nm}^{k}x_m(t)-r_{mn}^{k'}x_n(t)]\ln\frac{r_{nm}^k}{r_{mn}^{k'}}.
\end{align}
Consequently, we can calculate
\begin{align}
\sigma_t&=\sigma_t^{\rm pop}+\sigma_t^{\rm env}\notag\\
&=-\sum_k\sum_{m\neq n}[r_{nm}^kx_m(t)-r_{mn}^{k'}x_n(t)]\ln x_n(t)\notag\\
&+\frac{1}{2}\sum_k\sum_{m,n}[r_{nm}^{k}x_m(t)-r_{mn}^{k'}x_n(t)]\ln\frac{r_{nm}^k}{r_{mn}^{k'}}\notag\\
&=\frac{1}{2}\sum_k\sum_{m,n}[r_{nm}^kx_m(t)-r_{mn}^{k'}x_n(t)]\ln\frac{x_m(t)}{x_n(t)}\notag\\
&+\frac{1}{2}\sum_k\sum_{m,n}[r_{nm}^{k}x_m(t)-r_{mn}^{k'}x_n(t)]\ln\frac{r_{nm}^k}{r_{mn}^{k'}}\notag\\
&=\frac{1}{2}\sum_k\sum_{m,n}[r_{nm}^kx_m(t)-r_{mn}^{k'}x_n(t)]\ln\frac{r_{nm}^kx_m(t)}{r_{mn}^{k'}x_n(t)}\notag\\
&\eqqcolon\sum_k\sum_{m,n}\sigma_{nm}^k(t).
\end{align}
Similarly, the dynamical activity rate can also be calculated as
\begin{align}
a_t&=\sum_k\tr{L_k\varrho_tL_k^\dagger}\notag\\
&=\frac{1}{2}\sum_k\qty[\tr{L_k\varrho_tL_k^\dagger}+\tr{L_{k'}\varrho_tL_{k'}^\dagger}]\notag\\
&=\frac{1}{2}\sum_k\sum_{m,n}\qty[r_{nm}^{k}x_m(t)+r_{mn}^{k'}x_n(t)]\notag\\
&\eqqcolon\sum_k\sum_{m,n}a_{nm}^k(t).
\end{align}
By applying the triangle inequality and Jensen's inequality, we obtain the desired inequality \eqref{eq:prop.cur.ent.dyn.tmp1} as
\begin{align}
&\sum_{\ev{m,n}\in\mca{E}}\qty|\sum_k[r_{nm}^kx_m(t)-r_{mn}^{k'}x_n(t)]|\notag\\
&\le\frac{1}{2}\sum_k\sum_{m,n}|r_{nm}^kx_m(t)-r_{mn}^{k'}x_n(t)|\notag\\
&=\sum_k\sum_{m,n}\frac{\sigma_{nm}^k(t)}{2}\Phi\qty(\frac{\sigma_{nm}^k(t)}{2a_{nm}^k(t)})^{-1}\notag\\
&\le\frac{\sigma_t}{2}\Phi\qty(\frac{\sigma_t}{2a_t})^{-1}.
\end{align}
\end{proof}

%